\documentclass[11pt]{article}
\usepackage[T1]{fontenc}
\usepackage[utf8]{inputenc}
\usepackage{lmodern}
\usepackage{microtype}

\usepackage{amsmath,amssymb,amsthm,mathtools}
\usepackage{geometry}
\usepackage{hyperref}

\hypersetup{
  colorlinks=true,
  linkcolor=black,
  citecolor=black,
  urlcolor=black
}

\usepackage{physics}
\usepackage[shortlabels]{enumitem}
\usepackage{wrapfig, float}
\usepackage{tikz}
\usetikzlibrary{calc}
\usetikzlibrary{positioning, shapes.geometric, fit}

\usepackage{thmtools}

\usepackage[justification=centering, margin=30pt, labelfont=bf]{caption}
\usetikzlibrary{patterns}
\usetikzlibrary{arrows}

\renewcommand{\and}{\hspace{6pt}\text{and}\hspace{6pt}}

\DeclareMathOperator{\id}{id}

\DeclareMathOperator{\CFI}{CFI}

\newcommand{\If}{\hspace{6pt}\text{if}\hspace{6pt}}

\newcommand{\C}{\mathcal{C}}

\newcommand{\F}{\mathcal{F}}

\newcommand{\M}{\mathcal{M}}

\newcommand{\imm}{\preccurlyeq_{\text{imm}}}
\newcommand{\immsimp}{\preccurlyeq_{\text{immsimp}}}

\newcommand{\N}{\mathbb{N}}

\newcommand\restr[2]{{
  \left.\kern-\nulldelimiterspace 
  #1 
  \vphantom{\big|} 
  \right|_{#2} 
  }}

\newtheorem{theorem}{Theorem}
\newtheorem{lemma}[theorem]{Lemma}
\newtheorem{corollary}[theorem]{Corollary}
\newtheorem{proposition}[theorem]{Proposition}

\newtheorem{question}[theorem]{Question}
\newtheorem{conjecture}[theorem]{Conjecture}

\theoremstyle{definition}
\newtheorem{definition}[theorem]{Definition}
\newtheorem{example}[theorem]{Example}

\newenvironment{claim}{
  \refstepcounter{theorem} 
  \begin{trivlist}
  \item[\hskip \labelsep {\textit{Claim \thetheorem.}}] \itshape
}{
  \end{trivlist}
}

\newenvironment{proofofclaim}[1][Proof of claim]{
  \begin{proof}[#1]
   
}{
  \end{proof}
}

\title{Oddomorphisms, Split-Off Minors, and the Strong Roberson Conjecture}
\author{Arnar Á. Kristjánsson}
\date{\today}

\begin{document}

\maketitle

\begin{abstract}
We show that the existence of an oddomorphism from a graph $F$ to a graph $G$ does not imply that $G$ is a minor of $F$. This answers a question posed by Roberson (2022) and shows that the CFI graphs cannot be used to prove the Strong Roberson Conjecture. 
Additionally, we introduce the concept of a \emph{split-off minor} and show that the existence of an oddomorphism from $F$ to $G$ implies that $G$ is a split-off minor of $F$. Consequently, every class that is closed under taking split-off minors and disjoint unions is homomorphism distinguishing closed. The split-off minor relation is the first minor-like structural relation shown to have this property, marking a meaningful advancement in our understanding of the interaction between structural graph containment and homomorphism indistinguishability relations.
\end{abstract}

\section{Introduction}
A central technique in the study of limitative results and lower bounds in theoretical computer science is the construction of a pair of distinct structures that appear nearly identical.
To this end, the Cai–Fürer–Immerman (CFI) construction has established itself as an exceptionally versatile tool.
It produces, from a given base graph $G$, a pair of graphs $\CFI_0(G)$, $\CFI_1(G)$, which are locally similar but non-isomorphic.
The construction was introduced by Cai, Fürer, and Immerman \cite{cai1992optimal} to show that, for any $k$, $k$-variable counting logic cannot distinguish all non-isomorphic graphs.
Equivalently, this demonstrates that the $k-1$-dimensional Weisfeiler-Leman algorithm fails to solve the graph isomorphism problem.
Subsequently, the construction has been adapted in numerous forms across a wide range of problems, becoming ubiquitous in finite model theory.
For example, variations of it have repeatedly been used to separate logics from \textsf{PTIME} \cite{Dawar2007thepowerofcounting,gradel2019rank,lichter2023separating} and in the analysis of the Weisfeiler-Leman algorithm \cite{Furer2001weisfeiler, FUHLBRUCK202196, grohe2025compressing}. Recently, it has also been used to establish a \textsf{\#P}-hardness result \cite{DBLP:conf/soda/Curticapean24} and to prove the equirank homomorphism preservation theorem \cite{rossman:LIPIcs.CSL.2025.6}.
\par 
Another line of research that has been gaining traction in recent years is the use of homomorphism counts for characterizing relations between structures. The foundational result by Lovász \cite{lovasz1967operations} states that two finite graphs are isomorphic if and only if they have the same number of homomorphisms from all finite graphs. Dvořak \cite{dvovrak2010recognizing} later proved that a pair of graphs is indistinguishable in $k$-variable counting logic if and only if they have the same number of homomorphisms from every graph of treewidth at most $k-1$, and Grohe \cite{grohe2020counting} proved an analogous result relating counting logic with quantifier rank bounded by $k$ to homomorphism counts from graphs of treedepth at most $k-1$. These results were generalized categorically by Dawar, Jakl, and Reggio \cite{Dawar2021Lovasztype} and are now often referred to as \emph{Lovász-type theorems}. 
Another important Lovász-type theorem that does not fit into the categorical generalisation is the result by Mančinska and Roberson \cite{Mancinska2020:quantum} that a pair of graphs is quantum isomorphic if and only if they have the same number of homomorphisms from all planar graphs. 
\par 
These results led to the study of \emph{homomorphism indistinguishability relations} more generally. For a class $\F$ of graphs, its homomorphism indistinguishability relation $\equiv_\F$ consists of the pairs of graphs that have the same homomorphism count from every element of $\F$. Roberson \cite{roberson2022oddomorphisms} posed the question of when two such relations $\equiv_{\F_1}$, $\equiv_{\F_2}$ are distinct. He observed that if $\F_1$ and $\F_2$ are the maximal classes defining their homomorphism indistinguishability relation, the comparison between the indistinguishability relations becomes equivalent to the comparison between the underlying classes of graphs. If a class $\F$ satisfies this maximality condition, we say that it is \emph{homomorphism distinguishing closed}, abbreviated \emph{h.d.\ closed}.
Roberson made the following conjecture.
\begin{conjecture}[\emph{Strong Roberson Conjecture}, Conjecture 4 in \cite{roberson2022oddomorphisms}]
    Every class of graphs that is minor closed and closed under disjoint unions is homomorphism distinguishing closed.
\end{conjecture}
In his work and in subsequent research, numerous classes have been shown to be h.d.\ closed, including many of the canonical minor closed classes. This includes the class of planar graphs \cite{roberson2022oddomorphisms}, the class of graphs of treewidth at most $k$ \cite{neuen2023homomorphism}, and the class of graphs of treedepth at most $k$ \cite{fluck2023goingdeepgoingwide}.
Roberson also posed a weaker conjecture that would still be a powerful result.
\begin{conjecture}[\emph{Weak Roberson Conjecture}, Conjecture 5 in \cite{roberson2022oddomorphisms}]
    Every class of graphs that is minor closed, closed under disjoint unions, and is not the class of all graphs has a homomorphism indistinguishability relation that is not equal to isomorphism.
\end{conjecture}
\par 
The main contribution of Roberson's work \cite{roberson2022oddomorphisms} was the introduction and analysis of the concept of an \emph{oddomorphism}. These are graph homomorphisms that satisfy certain parity constraints on their fibres (see Definition \ref{def:oddomorphisms}). The reason for our interest in oddomorphisms is that they delineate which graphs can distinguish a pair of CFI graphs with homomorphism counts.
\begin{theorem}[Theorem 3.13 in \cite{roberson2022oddomorphisms}]
    The numbers of homomorphisms from a graph $F$ to $\CFI_0(G)$ and $\CFI_1(G)$ are different if and only if $F$ has a weak oddomorphism to $G$.
\end{theorem}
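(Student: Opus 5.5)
The plan is to recover Roberson's argument by writing the homomorphism counts into the two CFI graphs as a single signed character sum. First we fix the model. For a connected base graph $G$, the vertices of $\CFI_a(G)$ are pairs $(v,S)$, where $S\subseteq E(v)$ is a set of edges incident to $v$ and $|S|$ is even, except that at one distinguished vertex $u$ we require $|S|\equiv a \pmod 2$. Two vertices $(v,S)$ and $(w,T)$ are adjacent iff $vw\in E(G)$ and $[vw\in S]=[vw\in T]$, i.e.\ the two endpoints agree on the bit of the shared edge. Each homomorphism $h\colon F\to \CFI_a(G)$ projects to a homomorphism $\varphi\colon F\to G$. So it suffices to fix $\varphi$ and compare the numbers of lifts $N_a(\varphi)$ for $a=0,1$.

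For fixed $\varphi$, a lift is a choice of $S_x\subseteq E(\varphi(x))$ for each $x\in V(F)$, subject to linear conditions over $\mathbb{F}_2$. There are the parity constraints at each $x$, with the constant $a$ entering only when $\varphi(x)=u$. There are also the edge constraints: for each $xy\in E(F)$, the bit of $\varphi(x)\varphi(y)$ in $S_x$ equals the bit in $S_y$. The number of solutions of an affine system over $\mathbb{F}_2$ is either $0$ or $2^{\dim}$, and the homogeneous part does not depend on $a$. Hence $N_0(\varphi)\neq N_1(\varphi)$ exactly when the system is solvable for one value of $a$ and not the other. Since $a=0$ always has the trivial solution, this happens exactly when the $a=1$ system is inconsistent. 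By linear duality, that holds iff there is a dual certificate: a set $X\subseteq V(F)$ of parity constraints, together with edge constraints, whose sum is $0=1$. Equivalently, we need a subset $X$ of vertices whose parity equations sum to an equation involving only edge-variables that the edge constraints force to cancel, while the number of chosen vertices in $\varphi^{-1}(u)$ is odd.

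The key step is to translate this certificate into Roberson's parity conditions. Concretely, we look for a set $X$ of ``odd'' vertices of $F$ such that, for each edge $vw\in E(G)$ and each $x\in X$ with $\varphi(x)=v$, the number of $F$-neighbours of $x$ in $\varphi^{-1}(w)$ is odd. Vertices outside $X$ should have an even count. This is precisely the notion of a weak oddomorphism in Definition~\ref{def:oddomorphisms}, with the odd set recorded by $X$, together with the condition that $|X\cap\varphi^{-1}(u)|$ is odd; I expect that condition to be equivalent to the requirement that $\varphi$ be a weak oddomorphism. I expect this translation to be the main obstacle: matching the certificate exactly to the paper's definition, including the subtlety that ``weak'' allows restricting to a subgraph of $F$ (the support of the certificate). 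I would first handle it by showing that the support of $X$ and the cancelling edges forms the required subgraph. Then I would verify both directions: a weak oddomorphism yields an inconsistent system, and inconsistency yields one.

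Finally, we sum over $\varphi$. We have $\hom(F,\CFI_0(G))-\hom(F,\CFI_1(G))=\sum_\varphi \bigl(N_0(\varphi)-N_1(\varphi)\bigr)$, and every term is nonnegative because $N_1(\varphi)\in\{0,N_0(\varphi)\}$. So the two counts differ iff some $\varphi$ has an inconsistent $a=1$ system, i.e.\ iff $F$ has a weak oddomorphism to $G$. This monotonicity, namely that all terms have the same sign, is what prevents cancellation and makes the ``if and only if'' work.
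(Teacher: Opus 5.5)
Your reduction is sound and is the standard $\mathbb{F}_2$ argument behind Roberson's theorem; the paper itself only cites the result. For fixed $\varphi$, the lifts are the solutions of an affine system whose homogeneous part is the $a=0$ system. So $N_1(\varphi)\in\{0,N_0(\varphi)\}$, the summed differences cannot cancel, and $N_1(\varphi)=0$ iff some combination of rows reads $0=1$. The gap is the translation of that certificate into a weak oddomorphism. You flag it as the main obstacle and only ``expect'' it to work, but it is where the content lies, and your description of it is not right as stated.

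Write the certificate as a set $X\subseteq V(F)$ of parity rows and a set $E'\subseteq E(F)$ of edge rows. Take the variable recording the bit of $e=vw$ in $S_x$, where $\varphi(x)=v$. Its coefficient is $[x\in X]+|\{f\in E' : x\in f,\ \varphi(f)=e\}|$. Its vanishing is therefore condition (ii) of Definition~\ref{def:oddomorphisms} for $\restr{\varphi}{F'}$ with $F'=(V(F),E')$. This is a condition on $E'$-neighbours, not on $F$-neighbours as you wrote; the latter would force $\varphi$ itself to be an oddomorphism.

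Moreover, the right-hand side only tells you that $|X\cap\varphi^{-1}(u)|$ is odd, whereas condition (i) needs an odd number of odd vertices in \emph{every} fibre. The missing idea is a handshake argument. Every edge of $E'$ mapping to $e=vw$ has one endpoint in $\varphi^{-1}(v)$ and one in $\varphi^{-1}(w)$. Summing $E'$-degrees on each side gives $|X\cap\varphi^{-1}(v)|\equiv|\{f\in E':\varphi(f)=e\}|\equiv|X\cap\varphi^{-1}(w)| \pmod 2$. Connectivity of $G$ then spreads oddness from $u$ to all fibres; in particular, no fibre of $\restr{\varphi}{F'}$ is empty.

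This is the only place connectivity enters, and it is genuinely needed. For $G=K_2\sqcup K_1$ with $u$ in the $K_2$, the graph $F=K_2$ has $2$ versus $0$ homomorphisms into the two CFI graphs. Yet it admits no weak oddomorphism, because no homomorphic image of $K_2$ meets the $K_1$ fibre.

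The converse direction is then immediate. Given a weak oddomorphism with subgraph $F'$ and odd set $X$, the rows indexed by $X$ and $E(F')$ sum to $0=|X\cap\varphi^{-1}(u)|\equiv 1$.
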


This result allows for a general study of $\CFI$ graphs through the framework of homomorphism indistinguishability. By only observing the weak oddomorphisms, we can determine which $\CFI$ graph pairs are homomorphism indistinguishable with respect to which classes of graphs. As an example, we can cast the original result of Cai, Fürer, and Immerman into this framework: the closure of the class of graphs of treewidth at most $k$ under weak oddomorphism is not equal to the class of all graphs.
This result also gives a neat method for proving that classes are h.d.\ closed. Namely, the classes that are closed under weak oddomorphisms are precisely the classes whose closure under disjoint unions and restrictions to connected components can be proven to be h.d.\ closed using $\CFI$ graphs. 
\begin{theorem}[Theorem 6.2 in \cite{roberson2022oddomorphisms}]
\label{thm:oddo-implies-hdclosed}
    A class of graphs that is closed under disjoint unions, restrictions to connected components, and weak oddomorphisms is h.d.\ closed.
\end{theorem}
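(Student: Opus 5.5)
The plan is to show that if $H \notin \F$, then there are graphs $G_0, G_1$ that agree on homomorphism counts from every graph in $\F$ but differ on counts from $H$. This exhibits $\F$ as the maximal class defining $\equiv_\F$, which is exactly h.d.\ closedness. We may assume $H$ is connected. If $H$ is disconnected, some component $H'$ of $H$ lies outside $\F$, because $\F$ is closed under disjoint unions. We would then build $G_0, G_1$ that separate $H'$. We also arrange that $G_0$ and $G_1$ have equal counts from all connected graphs other than those witnessing the difference, which handles products of component counts. Roberson's actual route is to take a suitable disjoint union of CFI pairs so that multiplicativity of hom-counts over components works out; we follow that.

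\textbf{Construction.} For connected $H \notin \F$, take the pair $\CFI_0(H), \CFI_1(H)$. By Theorem 3.13 of \cite{roberson2022oddomorphisms}, $\hom(F,\CFI_0(H)) \neq \hom(F,\CFI_1(H))$ if and only if $F$ has a weak oddomorphism to $H$. The identity map $H \to H$ is an oddomorphism, since each vertex's fibre is a single vertex with odd neighbourhood counts. Hence $H$ distinguishes the pair.

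\textbf{Agreement on $\F$.} Let $F \in \F$. If $F$ is connected and has a weak oddomorphism to $H$, then $H$ would lie in $\F$ by closure under weak oddomorphisms; here we read closure as covering targets of weak oddomorphisms from members. That would be a contradiction, so connected members of $\F$ do not distinguish the pair. For disconnected $F$, hom counts factor over connected components. Each component is in $\F$ by closure under restrictions to components, so each component does not distinguish the pair, and the products agree.

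\textbf{Main obstacle.} The delicate point is the weak oddomorphism theorem's handling of disconnected $F$ and of how "weak" oddomorphisms relate to oddomorphisms of subgraphs. Specifically, we must check that a weak oddomorphism from a connected $F$ lands in a form where closure of $\F$ gives $H \in \F$. I would first try to use the definition directly: a weak oddomorphism $F \to H$ is an oddomorphism from some subgraph $F' \subseteq F$ onto $H$. In that case, closure under weak oddomorphisms is stated exactly to absorb this, and the argument goes through using only the closure properties in the hypothesis.
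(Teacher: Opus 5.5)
The paper does not prove this statement itself (it is quoted from Roberson), so your argument has to stand on its own. For connected $H\notin\F$ it does. Since $\id_H$ is an oddomorphism, $H$ distinguishes $\CFI_0(H)$ and $\CFI_1(H)$. A member of $\F$ that distinguished them would have a weak oddomorphism to $H$ and force $H\in\F$. (As quoted, Theorem 3.13 covers disconnected $F$ too, so closure under components is not even needed here.)

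The genuine gap is the reduction ``we may assume $H$ is connected''. You assert it but never carry it out, and it is exactly where closure under disjoint unions and under components must be used. A pair separating a component $H'\notin\F$ need not separate $H$, since $\hom(H,X)=\prod_j\hom(H_j,X)$ over the components $H_j$ of $H$. Two things can go wrong:
\begin{enumerate}[(a)]
\item Some factor may vanish for both graphs. If $H'$ is bipartite then so is $\CFI_i(H')$, since it maps homomorphically to $H'$. A triangle component then makes both products $0$.
\item Other components $H_j\notin\F$, e.g.\ odd lifts of $H'$, may also have weak oddomorphisms to $H'$, so several factors change. The ``if and only if'' form of Theorem 3.13 says nothing about the direction of change, so nothing prevents the products from coinciding.
\end{enumerate}
Your phrases ``equal counts from all connected graphs other than those witnessing the difference'' and ``a suitable disjoint union of CFI pairs'' address neither issue. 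The ``main obstacle'' you single out is immediate from the definition.

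One way to close the gap: fix a component $H'\notin\F$ of $H$ and put $X_i=\CFI_i(H')\sqcup H$.
\begin{enumerate}[(i)]
\item Agreement on $\F$. For connected $F$ we have $\hom(F,X_i)=\hom(F,\CFI_i(H'))+\hom(F,H)$, so connected members of $\F$ do not distinguish $X_0$ and $X_1$. Disconnected members do not either, because their components lie in $\F$. This is where closure under components is really needed.
\item No vanishing. Every factor now satisfies $\hom(H_j,X_i)\ge\hom(H_j,H)\ge 1$.
\item No cancellation. Use $\hom(F,\CFI_0(G))\ge\hom(F,\CFI_1(G))$ for all $F$, with $\CFI_0$ the untwisted graph. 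This holds because, for a fixed homomorphism $F\to G$, the lifts to $\CFI_0(G)$ are the solutions of a homogeneous linear system over $\mathbb{F}_2$. The lifts to $\CFI_1(G)$ are either none or a coset of the same solution space. Hence each factor for $X_0$ dominates the corresponding one for $X_1$, strictly for $H_j=H'$, so $\hom(H,X_0)>\hom(H,X_1)$.
\end{enumerate}

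Alternatively, you can avoid the inequality in (iii) by choosing $H'$ lexicographically maximal in $(|V|,|E|)$ among the components outside $\F$. Components in $\F$ have no weak oddomorphism to $H'$. An oddomorphism is surjective on vertices, and on edges whenever it is bijective on vertices. So any component outside $\F$ with a weak oddomorphism to $H'$ must be isomorphic to $H'$. The only differing factors are then copies of $\hom(H',X_0)\neq\hom(H',X_1)$, and the products differ.
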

To prove the Strong Roberson Conjecture, it therefore suffices to answer the following question in the positive.
\begin{question}[Question 5 in \cite{roberson2022oddomorphisms}]
\label{q}
    Does the existence of a weak oddomorphism from $F$ to $G$ imply that $F$ contains $G$ as a minor for any connected graph $G$?
\end{question}
To attest to the power of CFI graphs, almost all classes known to be h.d.\ closed are closed under weak oddomorphisms, with the notable exception of essentially profinite classes \cite{Seppelt2024:logical}.
For this reason, answering Question \ref{q} in the positive has been considered to be the only viable path to proving the Strong Roberson Conjecture. 
\par
With the exception of Roberson's original work \cite{roberson2022oddomorphisms}, many of the results proving that classes are closed under weak oddomorphisms \cite{neuen2023homomorphism, fluck2023goingdeepgoingwide, schindling:LIPIcs.MFCS.2025.89} were not proved using combinatorial or structural properties of oddomorphisms, but rather adaptations of the game arguments from \cite{Dawar2007thepowerofcounting}.
However, two recent pieces of work have made significant improvements to our structural understanding of oddomorphisms and homomorphism indistinguishability relations. 
First, Neuen and Seppelt \cite{neuen2026distinguishinggraphscountinghomomorphisms} used oddomorphisms to prove that the homomorphism indistinguishability relation of every class of bounded \emph{vortex-free Hadwiger number} is not equal to isomorphism. 
Additionally, they proved that there exists a class of graphs that excludes a fixed graph as a topological minor and has a homomorphism indistinguishability relation that is equal to isomorphism. 
Since a graph class with a bounded vortex-free Hadwiger number excludes some graph as a minor, and minor-closed classes in turn exclude topological minors, these results delineate the boundaries of the Weak Roberson Conjecture. Specifically, they establish that a restricted version of the conjecture holds, while a broader generalization fails.
Second, Jiménes et al. \cite{jiménez2026homomorphismcountingimmersionclosedclasses}
introduced the idea that perhaps \emph{graph immersions} have a more suitable connection to oddomorphisms and h.d.\ closedness than minors do. A graph $F$ contains a graph $G$ as an immersion if $G$ can be obtained from $F$ using the operations of taking a subgraph and \emph{splitting off} a path of length 2, i.e. deleting it and adding an edge between its endpoints. This gives a relation that is strictly weaker than the topological minor but incomparable to the minor.
They used oddomorphisms to prove the analogue of the Weak Roberson Conjecture with minors replaced with immersions. That is, they proved that every class that excludes a fixed graph as an immersion has a homomorphism indistinguishability relation that is not equal to isomorphism. 
This is neither a weaker nor a stronger statement than the original Weak Roberson Conjecture since the graph immersion relation is incomparable to the graph minor relation.
\par 

The main contributions of this paper are two results about oddomorphisms. 
First, we resolve Question \ref{q} in the negative. 
That is, we construct a pair of graphs $F$, $G$ such that $F$ has an oddomorphism to $G$ but $G$ is not a minor of $F$.
We give two different methods for obtaining such a pair, each with its own appeal. One is a deterministic construction of a specific pair of rather small graphs $F$ and $G$ satisfying the desired properties. The other leverages randomness to non-constructively find infinitely many such pairs.
This result implies that oddomorphisms and CFI graphs cannot be used to prove the Strong Roberson Conjecture. As previously mentioned, this was considered the only known feasible path to proving it.
The result therefore shifts the focus of research aiming to resolve the conjecture.
\par 
Second, we introduce the concept of a \emph{split-off minor} and prove a tight connection between oddomorphisms and split-off minors.
\begin{theorem}
\label{thm:oddo-implies-lm}
    If a graph $F$ has an oddomorphism to a graph $G$, then $G$ is a split-off minor of $F$.
\end{theorem}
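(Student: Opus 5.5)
The plan is to prove the theorem by induction on $|V(F)|+|E(F)|$. I work with a slightly more flexible invariant than an oddomorphism, one preserved by the operations allowed in a split-off minor: deleting vertices and edges, contracting edges, and splitting off a path of length $2$. Take a homomorphism $\varphi\colon F\to G$ in which every vertex of $F$ is odd or even in the sense of Definition~\ref{def:oddomorphisms}, and every fibre $\varphi^{-1}(v)$ contains an odd number of odd vertices. A vertex $a\in\varphi^{-1}(v)$ is odd (even) if $|N(a)\cap\varphi^{-1}(w)|$ is odd (even) for every neighbour $w$ of $v$. The point is that parities are all that matter. Each step below turns $F$ into a smaller graph $F'$, obtained from $F$ by split-off minor operations, together with a map $\varphi'$ satisfying the same invariant. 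At the end I read off $G$ directly.

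\textbf{Step 1 (thinning).} Suppose some $a\in\varphi^{-1}(v)$ has two neighbours $b,c$ in the same fibre $\varphi^{-1}(w)$. Split off the path $b\,a\,c$, which creates an edge $bc$ inside $\varphi^{-1}(w)$, then contract $bc$ to a new vertex $m$. If $m$ acquires parallel edges, delete them in pairs; this is a subgraph operation and does not change any parity. Now check the invariant. Vertex $a$ lost two edges into $\varphi^{-1}(w)$, so its parity is unchanged. For each fibre, $m$'s count into it is congruent mod $2$ to $b$'s count plus $c$'s count, so $m$'s type is the sum of the types of $b$ and $c$, with odd${}+{}$odd${}={}$even. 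Hence the number of odd vertices in $\varphi^{-1}(w)$ changes by $0$ or $-2$ and stays odd. All other vertices are unaffected, because merging $b$ and $c$ preserves every other vertex's neighbour count into $\varphi^{-1}(w)$ mod $2$. $\varphi$ descends to a homomorphism, and the vertex count drops. Iterate until every vertex has at most one neighbour in each fibre.

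\textbf{Step 2 (cleanup).} After Step 1, an even vertex has no neighbours at all, so I delete it. An odd vertex in $\varphi^{-1}(v)$ has exactly one neighbour in $\varphi^{-1}(w)$ for each $w\sim v$. After deleting any edges that do not map to edges of $G$, the map $\varphi$ is a covering map onto $G$ (I may assume $G$ is connected, otherwise I argue componentwise), and every fibre has the same odd size $k$. If $k=1$, then $F'\cong G$ and we are done.

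\textbf{Step 3 (reducing sheets), the main obstacle.} A cover of $G$ need not contain $G$ as a minor, which is exactly why splitting off is needed. To go from $k$ to $k-2$ sheets, pick two vertices $x,x'$ in one fibre $\varphi^{-1}(v)$ and identify them. I realise this as a split-off minor operation by first splitting off along a path from $x$ to $x'$ that leaves the fibre and returns, then contracting; this is where care is needed. The merged vertex is even with exactly two neighbours $y,y'$ in each adjacent fibre. Apply Step 1 to it: split off $y\,m\,y'$ and contract. Now the new vertex in $\varphi^{-1}(w)$ is even with two neighbours in each of its adjacent fibres, so the merging propagates along $G$. Because $G$ is connected, it reaches every fibre, and the now-isolated even vertices are deleted.

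The delicate points are two. First, the initial identification of $x$ with $x'$ must itself be a legal split-off-minor move; I would try to start from a pair $x,x'$ with a common neighbour, which a connected cover with $k\ge 3$ should provide after choosing a suitable fibre, or else reduce to that case. Second, I must ensure that the propagation never needs to merge a vertex with itself in a way that breaks the parity bookkeeping. The invariant of Step 1 handles this automatically, since it only tracks parities.

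Each step strictly decreases the number of vertices, so the process terminates in the case $k=1$. Since all operations used are split-off minor operations, $G$ is a split-off minor of $F$.
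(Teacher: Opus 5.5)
\textbf{Step 3 has a genuine gap.} Steps 1 and 2 are correct and match the paper's reduction to lifts. The paper also splits off paths inside each $\varphi^{-1}(e)$, contracts the resulting intra-fibre edges, and iterates until $\varphi$ is a covering with odd fibres. You do this one length-$2$ path at a time, and your parity bookkeeping is fine.

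Step 3 carries all the remaining content, since a lift need not contain $G$ as a minor (Section~\ref{sec:oddo-doesnt-mean-minor}). The pair $x,x'$ in one fibre with a common neighbour that you hope for never exists at that point. Step~1 stops exactly when no vertex has two neighbours in the same fibre, so the edges between any two fibres form a matching.

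If you instead split off a longer path $x=u_0,u_1,\dots,u_r=x'$, the labelling breaks. Each internal vertex $u_i$ loses its unique edge into the fibre of $u_{i-1}$ and its unique edge into the fibre of $u_{i+1}$, and these are different fibres. So $u_i$ is left with $0$ edges into two adjacent fibres and $1$ into the others, and is neither odd nor even. Likewise the merged vertex has lost its edges to $u_1$ and $u_{r-1}$, so it is not ``even with two neighbours in every adjacent fibre''. After this first move you are outside your invariant, Step~1 cannot be invoked, and the propagation never starts. Nothing in the proposal restores the invariant, and your own ``should provide'' claim is false.

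What is missing is a direct proof that a lift of $G$ has $G$ as a split-off minor. The paper gives this in Lemma~\ref{lemma:perfect-matching-lm} without reducing the number of sheets:
\begin{enumerate}[(i)]
\item Fix a spanning tree $T$ of $G$ and one lifted copy $T_F$ of it in $F$.
\item For a non-tree edge $\{a,b\}$ with lifts $x,y\in V(T_F)$, the preimage of its fundamental cycle is a disjoint union of cycles. The cycle through the $x$--$y$ path of $T_F$ yields an $x$--$y$ path internally disjoint from $T_F$ whose first and last edges project to $\{a,b\}$.
\item Contract every edge with both ends outside $T_F$. These paths become paths of length at most $2$. They are pairwise edge-disjoint and disjoint from $E(T_F)$, because their edges project to distinct non-tree edges of $G$.
\item Together with the edges of $T_F$, they form an immersion of $G$ with branch vertices $V(T_F)$.
\end{enumerate}
Replacing your Step~3 with this argument completes the proof.
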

The split-off minor relation is in some sense the smallest natural relation that extends both the graph minor and immersion relations. 
A graph $G$ is a split-off minor of $F$ if $G$ can be obtained from $F$ by applying the operations of taking subgraphs, contracting edges, and splitting off paths of length 2.
Theorem \ref{thm:oddo-implies-lm} implies that every class closed under split-off minors is closed under weak oddomorphisms, and therefore, if such a class is additionally closed under taking disjoint unions it is h.d.\ closed.
This proves the Strong Roberson Conjecture with minors replaced with split-off minors.
This result aligns well with the flexibility of the framework introduced by Roberson when formulating the conjecture. Indeed, he explicitly noted that the graph minor relation might ultimately need to be replaced by another property for the conjecture to hold.
This is the first result that proves such a modified Strong Roberson Conjecture with the minor condition replaced with another similar structural relation. It is therefore a significant step towards a better understanding of how the homomorphism indistinguishability relation of a class of graphs is affected by its structural properties.

\par 
The rest of the paper is dedicated to exploring properties of the split-off minor relation. We show that it is distinct from the closely related \emph{lift-minor} relation studied by Golovach et al. \cite{GOLOVACH2014286}. We then give a few examples of classes that are closed under taking split-off minors, giving applications of Theorem \ref{thm:oddo-implies-lm} to finding h.d.\ closed classes. This includes every minor closed class of forests, the class of \emph{cactus graphs}, and the class of graphs of \emph{cyclomatic number} at most $n$.

\section{Preliminaries}
\subsection{Operations on graphs and multigraphs}
Unless otherwise stated, a \emph{graph} is finite, simple, undirected, and loopless. We denote the set of vertices and edges of a graph $G$ with $V(G)$ and $E(G)$, respectively.
Multigraphs are defined analogously, except that $E(G)$ is a multiset and may contain loops, which we can represent as a function $\chi_{E(G)}: \{ \{v,u\} \mid v,u \in V(G) \} \to \N_0$. We view standard graphs as a subclass of multigraphs. For definitions of basic graph theoretic notions not introduced here, we refer the reader to \cite{diestel2025graph}.

\par 
When an edge $e$, with distinct endpoints $v$ and $w$, of a multigraph $G$ is \emph{contracted}, the graph $G'$ is formed. Its set of vertices is defined by 
$$V(G') \coloneqq (V(G)\setminus \{v,w\}) \sqcup \{e\}$$
where $\sqcup$ denotes the disjoint union. The edge set is defined by \begin{align*}
    \chi_{E(G')}(\{u,u'\}) \coloneqq \begin{cases}
        \chi_{E(G)}(\{u, u'\}) &\If u,u' \neq e \\
        \chi_{E(G)}(\{v, u'\}) + \chi_{E(G)}(\{w, u'\}) &\If u = e \neq u'\\
        \chi_{E(G)}(\{v,w\}) -1 + \chi_{E(G)}(\{v\}) + \chi_{E(G)}(\{w\}) &\If u = e = u'
    \end{cases}.
\end{align*}
If $G$ is simple, the contraction operation can also be considered as a simple-graph operation. In this case, multi-edges are merged into a single edge and self-loops are removed after the contraction.
When considered as a multigraph operation, contracting an edge might turn a simple graph into a non-simple graph.
\par 
The operation of \emph{splitting off} (also called lifting) can be applied to a graph containing a pair of edges $e,e'$ with endpoints $u,v$ and $v,w$, respectively. The resulting graph is obtained by deleting the edges $e$ and $e'$ and adding an edge between $u$ and $w$. Again, this can be considered as both a simple-graph operation and a multigraph operation, where, in the former case, the graph is made simple after the application of the operation. Note that if there is a path $P$ going through the vertices $v_1,\ldots, v_n$, we can repeatedly split off the edges in the path until all of them have been deleted and an edge between $v_1$ and $v_{n}$ has been added. We call this \emph{splitting off the path $P$}.
\par
A graph $H$ is a \emph{minor} of a multigraph $G$ if $H$ can be obtained from $G$ by applying the operations of contracting edges and taking subgraphs.
Equivalently, we can define $H$ to be a minor of $G$ if there exists a surjective partial function $\pi:V(G) \rightharpoonup V(H)$ such that each fibre of $\pi$ is connected in $G$ and for each edge $\{v,w\}$ in $H$ there exists an edge in $G$ that maps to $\{v,w\}$ under $\pi$ (when extended to a function on the edges).
\par 
\begin{definition}
\label{def:immersions}
    A multigraph $H$ is said to be \emph{immersed} in a multigraph $G$ if $H$ can be obtained from $G$ by applying the operations of splitting off edges (considered as a multigraph operation) and taking subgraphs. In this case we also say that $G$ \emph{contains $H$ as an immersion} or that \emph{$H$ is an immersion of $G$}.
\end{definition}
Immersions also have a definition that avoids reference to operations on graphs.
\begin{proposition}
    The following are equivalent:
    \begin{enumerate}[(i)]
        \item The multigraph $H$ is immersed in the multigraph $G$.
        \item There exists an injective function $\pi:V(H) \to V(G)$ and a set of edge-disjoint paths $\{P_e \mid e \in E(H)\}$ in $G$ such that $P_e$ has endpoints $\pi(v)$ and $\pi(w)$, where $v$ and $w$ are the endpoints of $e$.
    \end{enumerate}
\end{proposition}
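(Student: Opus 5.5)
We prove the two implications separately. For (i)$\Rightarrow$(ii) we induct on the number of operations; for (ii)$\Rightarrow$(i) we give an explicit sequence of operations.

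\emph{(i)$\Rightarrow$(ii).} For the base case $H = G$, take $\pi = \id$ and let each $P_e$ be the one-edge path consisting of $e$ itself. These paths are edge-disjoint because distinct copies of a multi-edge are distinct edges.

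For the inductive step, it suffices to show that the property in (ii) is preserved under each single operation. Suppose $H'$ satisfies (ii) with respect to $G$ via $\pi$ and $\{P_e\}$, and $H$ is obtained from $H'$ by one operation.
\begin{itemize}
\item If $H$ is a subgraph of $H'$, restrict $\pi$ to $V(H)$ and keep only the paths $P_e$ with $e \in E(H)$.
\item If $H$ arises from $H'$ by splitting off edges $e=\{u,v\}$ and $e'=\{v,w\}$ into a new edge $f=\{u,w\}$, keep $\pi$. Define $P_f$ as the concatenation of $P_e$ and $P_{e'}$, which is a walk from $\pi(u)$ to $\pi(w)$ through $\pi(v)$. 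This walk uses only edges of $P_e \cup P_{e'}$. Those edges were disjoint from all other paths, and they are now used by $P_f$ alone.
\end{itemize}

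The main obstacle is that this concatenation is in general only a trail (edge-disjoint from itself, since $P_e$ and $P_{e'}$ are edge-disjoint), not a path: it may revisit vertices. Any trail between two vertices contains a path between the same endpoints using a subset of its edges, obtained by shortcutting repeated vertices. Replacing the trail by that path preserves edge-disjointness from all other $P_g$.

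Degenerate cases must also be handled.
\begin{itemize}
\item If $u=w$, the new edge is a loop and $P_f$ should be a trivial path. We allow paths of length zero, or, if loops require a cycle, note that the trail is a closed trail.
\item To avoid this issue altogether, one can state the invariant with ``trails'' throughout and only convert to paths at the end.
\end{itemize}

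\emph{(ii)$\Rightarrow$(i).} Given $\pi$ and edge-disjoint paths $\{P_e\}$, proceed as follows.
\begin{enumerate}
\item Delete every edge of $G$ not lying on any $P_e$ (a subgraph operation).
\item For each $e \in E(H)$ in turn, split off the path $P_e$ as a multigraph operation, replacing it by a single edge between $\pi(v)$ and $\pi(w)$. This is legitimate because the $P_e$ are edge-disjoint: splitting off one path never consumes edges of another, and all splits are performed as multigraph operations, so parallel edges are retained.
\item Delete the vertices outside $\pi(V(H))$, which are now isolated or irrelevant.
\end{enumerate}
The result is isomorphic to $H$ via $\pi$, since $\pi$ is injective and each edge $e$ of $H$ corresponds to exactly one new edge. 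A length-one path needs no splitting, and a loop edge $e$ with $v=w$ corresponds to a closed path, whose splitting-off yields a loop.
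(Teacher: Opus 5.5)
This proposition has no proof in the paper; it is quoted as the standard structural form of immersions. Your argument is the standard one. It follows the same scheme as the (i)$\Rightarrow$(ii) part of the paper's later structural characterisation of split-off minors: induction over the operations, concatenating the two paths when a pair is split off, and splitting off each $P_e$ for the converse. For loopless $H$ your proof is complete. There you are more careful than the paper's analogous step, which just calls the concatenation a path without shortcutting it. A loop that appears only in an intermediate graph can harmlessly be given a length-zero path, since it is eventually deleted or split off.

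The genuine problem is the treatment of loops of $H$.

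\begin{itemize}
\item In the case $u=w$ you say $P_f$ ``should be a trivial path'' and that you allow paths of length zero. Under that convention (ii)$\Rightarrow$(i) is false. Take $H$ to be one vertex with a loop and $G$ one vertex with no edges. The trivial path witnesses (ii), but neither deleting nor splitting off ever increases the number of edges, so $H$ is not immersed in $G$.
\item Your converse actually uses the other convention, that a loop corresponds to a closed path of positive length. So your two directions are not about the same condition (ii). The statement as written is ambiguous here too.
\item You need to fix the convention that a loop at $v$ is represented by a cycle through $\pi(v)$ (possibly a loop of $G$, or two parallel edges).
\item The case $u=w$ then needs an extra step. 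The closed trail obtained by concatenating $P_e$ and $P_{e'}$ need not be a cycle, so ``note that the trail is a closed trail'' is not enough. Take the segment of the trail between its first two visits to $\pi(u)$. If it has one or two edges, it is already a cycle. Otherwise, shortcut its interior, which avoids $\pi(u)$, to a path, and close it up with the segment's first and last edges. These are distinct because a trail repeats no edge.
\item This extraction is also exactly what ``convert to paths at the end'' would have to do for closed trails in your trail variant.
\end{itemize}

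With this repair the proof is correct.
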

If $\pi$ is a function witnessing the immersion of $H$ in $G$, then the vertices in the image of $\pi$ are said to be the \emph{branch vertices} of the immersion.

\subsection{Oddomorphisms}
Oddomorphisms were introduced by Roberson \cite{roberson2022oddomorphisms}. We give a slightly different but equivalent definition here.
\begin{definition}
\label{def:oddomorphisms}
    A homomorphism $\varphi$ from a multigraph $F$ to a graph $G$ is an \emph{oddomorphism} if every vertex of $F$ can be labelled either odd or even such that 
    \begin{enumerate}[(i)]
        \item There is an odd number of odd vertices in each fibre of $\varphi$.
        \item For each $e \in E(G)$, each odd vertex has odd degree and each even vertex has even degree in the induced subgraph of $F$ on the vertices in $\varphi^{-1}(e)$.
    \end{enumerate}
    If a vertex is odd/even under this labelling, we say it is \emph{$\varphi$-odd/even} or \emph{odd/even with respect to $\varphi$}.\par 
    A homomorphism $\varphi: F \to G$ is a \emph{weak oddomorphism} if there is a subgraph $F'$ of $F$ such that the restriction $\restr{\varphi}{F'}: F' \to G$ is an oddomorphism.
\end{definition}
Note that if $G$ is a graph, $F$ is a multigraph, $\varphi: F \to G$ is a homomorphism, and $e \in E(G)$, then the only edges $f$ in the induced subgraph of $F$ on $\varphi^{-1}(e)$ satisfy $\varphi(f) = e$. Condition (ii) in the preceding definition can thus be replaced by the statement that for each $e \in E(G)$, each odd/even vertex in $\varphi^{-1}(e)$ is incident to an odd/even number of edges $f$ satisfying $\varphi(f) = e$.

\begin{figure}[htbp]
    \centering
    \begin{tikzpicture}[
    vertex/.style={circle, draw=black, thick, minimum size=14pt, inner sep=0pt, font=\small},
    odd/.style={vertex, fill=blue!10},
    even/.style={vertex, fill=white},
    edge/.style={thick, black},
    bag/.style={ellipse, draw=gray, dashed, inner sep=8pt}
]

    \node[odd]  (Ao)  at (0.6, 5.2)   {o};
    \node[even] (Ae1) at (0, 4) {e};
    \node[even] (Ae2) at (0.8, 4)  {e};
    \node[bag, fit=(Ao) (Ae1) (Ae2)] (BagA) {};

    \node[odd]  (Bo)  at (-3, 0.5)   {o};
    \node[even] (Be1) at (-2.5, 2) {e};
    \node[even] (Be2) at (-1.5, 1) {e};
    \node[bag, fit=(Bo) (Be1) (Be2)] (BagB) {};

    \node[odd]  (Co)  at (3, 0.5)   {o};
    \node[even] (Ce1) at (2, 0) {e};
    \node[bag, fit=(Co) (Ce1)] (BagC) {};

    \draw[edge] (Ao)  -- (Be1);
    \draw[edge] (Be1) -- (Ae1);
    \draw[edge] (Ae1) -- (Be2);
    \draw[edge] (Be2) -- (Ae2);
    \draw[edge] (Ae2) -- (Bo);

    \draw[edge] (Bo)  -- (Ce1);
    \draw[edge] (Ce1) -- (Be2);
    \draw[edge] (Be2) -- (Co);

    \draw[edge] (Co) -- (Ao);

\end{tikzpicture}
    \caption{An example of a graph with an oddomorphism to $K_3$. The fibres of the oddomorphism are indicated by the dashed lines.}
    \label{fig:oddo_example}
\end{figure}

\section{The Strong Roberson Conjecture cannot be proved using oddomorphisms}
\label{sec:oddo-doesnt-mean-minor}
In this section we resolve Question \ref{q} in the negative.
In fact, we give two different constructions showing that the existence of an oddomorphism (and thus a weak oddomorphism) from $F$ to $G$ need not imply that $F$ contains $G$ as a minor. The first is a new construction while the second is drawn from the literature. Identifying it as a counterexample requires only a minor observation.

\subsection{First example: Gluing of 5-cycles}
\label{sec:first-counterexample}
The construction of this example requires some preparation before it can be presented.
A key step is to transfer the problem to the more restrictive setting of rooted minors. Then it suffices to find examples of graphs $G$ and $F$ such that $G$ is not a rooted minor of $F$ with some prescribed labels.
\par 
The following lemma is an application of a fairly standard technique in structural graph theory, but we give a proof for completeness.
\begin{lemma}
\label{lemma:rooted-minors}
    Let $F$ and $G$ be connected graphs and let $v_1,\ldots, v_n$ be distinct vertices in $G$ and $w_1,\ldots, w_n$ be distinct vertices in $F$.
    The graphs $F$ and $G$ can be extended to graphs $F_+$ and $G_+$  satisfying the following: $G_+$ is a minor of $F_+$ if and only if there is a partial function $\pi: V(F) \rightharpoonup V(G)$ witnessing the fact that $G$ is a minor of $F$ and additionally satisfying $\pi(w_i) = v_i$ for each $i \in \{1,\ldots, n\}$.
\end{lemma}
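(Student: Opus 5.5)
The plan is the standard gadget trick: attach to each prescribed vertex a large rigid ``label'' graph that is too big to be produced in any other way. Let $N$ exceed $|V(F)|+|V(G)|$. For each $i\in\{1,\dots,n\}$ fix a complete graph $K^{(i)}$ on $N+i$ vertices; the sizes are distinct so the labels cannot be confused. Form $G_+$ by identifying one vertex of $K^{(i)}$ with $v_i$ for each $i$. Form $F_+$ from a separate copy of the same cliques by identifying one vertex of the $i$-th clique with $w_i$. Both graphs are connected, and the cliques meet the original graphs in single cut vertices.

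\emph{Easy direction.} Suppose $\pi$ witnesses that $G$ is a minor of $F$ with $\pi(w_i)=v_i$. Extend $\pi$ to $F_+$ by sending each clique of $F_+$ identically onto the corresponding clique of $G_+$; at the glued vertex this agrees with $\pi(w_i)=v_i$. The fibres stay connected, since each new vertex is a singleton fibre and the old fibres are unchanged. Every edge of $G_+$ is realised: edges of $G$ by $\pi$, and clique edges by their copies.

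\emph{Hard direction.} Let $\pi_+:V(F_+)\rightharpoonup V(G_+)$ witness the minor, and proceed in order of decreasing clique size. The model of $K^{(n)}$ in $G_+$ needs $N+n$ pairwise adjacent connected branch sets in $F_+$, that is, a $K_{N+n}$ minor. I will show that all but at most one of these branch sets lie inside the largest clique of $F_+$.

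The key structural fact is that $F_+$ is built by $1$-sums. A $K_m$-minor with $m\ge 3$ of a $1$-sum lives essentially in one block: at most one branch set can use the cut vertex and leave the block. Applying this to $F_+$, the blocks are the cliques of size $N+i$ and the blocks of $F$. Since $N$ exceeds $|V(F)|$, a $K_{N+n}$ minor fits only into the clique of size $N+n$. By induction downward on $i$, with the larger cliques already used up, the model of $K^{(i)}$ must sit in the $i$-th clique of $F_+$. Each such clique then has every vertex used as (essentially) a singleton branch set. The single branch set containing the cut vertex $w_i$ may also extend into $F$; it must be the one mapped to the attachment vertex $v_i$, because that is the only vertex of $K^{(i)}$ with neighbours outside the clique in $G_+$.

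Restricting $\pi_+$ to $V(F)$ then gives branch sets for $G$. Vertices of $G$ outside the cliques must have their branch sets in $F$, and by the above these sets can only be connected through $F$. Moreover $\pi_+(w_i)=v_i$, since $w_i$ lies in the branch set for $v_i$. Any edge of $G$ is witnessed by an edge whose endpoints lie in $F$, because clique vertices other than the cut vertex are all used by the clique model. Taking the branch set of $v_i$ intersected with $V(F)$ keeps it connected, since the cut vertex separates the two parts. This yields the required $\pi$.

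The main obstacle is the counting argument that forces each clique model into its own clique. It has to rule out branch sets that straddle a cut vertex, and it has to rule out partially using several cliques. I would handle this with the $1$-sum lemma, proved directly: two branch sets on opposite sides of the cut vertex $c$ can be adjacent only through $c$, so at most one branch set contains $c$. The remaining sets must all lie on one side, which forces the large clique to live inside a single block. The size gaps then pin down which block that is.
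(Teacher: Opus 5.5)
Your proof is correct, but it takes a different route from the paper's.

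The paper glues $D\cdot 2^i$ pendant leaves to $w_i$ and to $v_i$, where $D$ is the degree sum of $F$, and then argues by counting. The branch set of $v_i$ must have degree sum at least $D\cdot 2^i$. Once $w_j\in\pi_+^{-1}(v_j)$ is known for all $j>i$, this forces $w_i\in\pi_+^{-1}(v_i)$. Because leaves are interchangeable, the paper then needs an extra repair step that permutes pendant vertices so that the restriction of $\pi_+$ to $V(F)$ is onto $V(G)$.

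You instead glue cliques of distinct sizes $N+i>|V(F)|$ and use the fact that a $K_m$-model with $m\ge 3$ in a $1$-sum collapses to a single block. This makes the gadget rigid. Every vertex of the $i$-th clique is consumed by the model of $K^{(i)}$, with all vertices other than $w_i$ appearing as singletons. Hence $\pi_+^{-1}(v_i)\subseteq V(F)$ automatically, so your intersection with $V(F)$ is not even needed, and the restriction of $\pi_+$ to $V(F)$ is already the desired $\pi$ with no permutation step. Your gadget is also polynomial rather than exponential in $n$.

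What the paper's leaves buy is that they preserve planarity and treewidth, which the paper uses right afterwards: $B_+$ is planar, so the class excluding $B_+$ as a minor has bounded treewidth. With your cliques $B_+$ would not be planar. Lemma~\ref{lemma:labelled-minors} would still go through, though, since Lemma~\ref{lemma:oddo_extension} allows gluing an arbitrary graph $H$.

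One small caveat: identifying the branch set through $w_i$ as that of $v_i$ uses that $v_i$ has a neighbour in $G$. That requires connectivity together with $|V(G)|\ge 2$. The case $|V(G)|=1$ is trivial, but you should set it aside explicitly.
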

\begin{proof}
    Let $D$ be the sum of the degrees of the vertices in $V(F)$. We construct $F_+$ from $F$ by adding $D\cdot 2^i$ degree 1 vertices adjacent to the vertex $w_i$ for each $i$. Similarly we construct $G_+$ from $G$ by adding $D\cdot 2^i$ degree 1 vertices adjacent to $v_i$ for each $i$.
    If $\pi$ witnesses the fact that $G$ is a minor of $F$ and $\pi(w_i) = v_i$ for each $i$, we can extend $\pi$ to a witness of the fact that $G_+$ is a minor of $F_+$ by bijectively mapping the new degree 1 vertices. 
    \par
    For the other direction, let $\pi_+: F_+ \rightharpoonup G_+$ witness the fact that $G_+$ is a minor of $F_+$. 
    For each $i$, the sum of the degrees of the vertices in $\pi_+^{-1}(v_i)$ must be at least $D\cdot 2^i$. 
    We take as an induction hypothesis that $w_j \in \pi_+^{-1}(v_j)$ for each $j>i$. The sum of the degrees of the remaining vertices excluding $w_i$ is $\sum_{0\leq j<i} D \cdot 2^j < D\cdot 2^i$, so it follows that $w_i$ must belong to the fibre $\pi_+^{-1}(v_i)$. Using induction we see that this holds for all $i\in \{1,\ldots, n\}$.
    Additionally, if $\pi_+$ maps a degree 1 vertex that is adjacent to $w_i$ and is outside of $F$ to a vertex inside $V(G)\setminus \{v_i\}$, that vertex must be a degree 1 vertex in $G_+$ that is adjacent to $v_i$. It is easy to see that such vertices can be permuted to define a partial function $\pi_+'$ that is equal to $\pi_+$ everywhere except at those degree 1 vertices and whose restriction to $V(F)$ is onto $V(G)$. This restriction is then the witness $\pi$ of the fact that $G$ is a minor of $F$ which additionally satisfies $\pi(w_i) = v_i$ for each $i$.
\end{proof}

To take advantage of Lemma \ref{lemma:rooted-minors}, we must show that oddomorphisms can be extended to the extensions $F_+$ and $G_+$.
We show the slightly more general fact that they can indeed be extended by gluing a graph to both the domain and codomain.\par 
For graphs $G$ and $H$ with $v$ a vertex in $G$ and $h$ a vertex in $H$ we let $G_v \oplus H_h$ denote the gluing of $G$ and $H$ on the vertices $v$ and $h$. More specifically, $G_v \oplus H_h$ can be constructed by taking the disjoint union of $G$ and $H$, adding the edge $\{v,h\}$ and then contracting it.

\begin{lemma}
\label{lemma:oddo_extension}
    Let $\psi: F \to G$ be an oddomorphism, $v$ be a vertex of $G$, and let $H$ be a graph with a vertex $h$. If there is exactly one $\psi$-odd vertex $u$ in $\psi^{-1}(v)$, then $\psi$ extends to an oddomorphism $\psi_+:F_u \oplus H_h \to G_v \oplus H_h$ by mapping the non-$h$ elements of $H$ to themselves.
\end{lemma}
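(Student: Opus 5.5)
We define $\psi_+$ explicitly and check the two parity conditions of Definition~\ref{def:oddomorphisms} fibre by fibre. In $F_u \oplus H_h$ the vertex $u$ is identified with $h$; in $G_v \oplus H_h$ the vertex $v$ is identified with $h$. Set $\psi_+(x) = \psi(x)$ for $x \in V(F)$, where the glued vertex $u=h$ goes to $\psi(u) = v = h$. For $x \in V(H)\setminus\{h\}$ set $\psi_+(x) = x$. This is a homomorphism. Edges of $F$ go to edges of $G$ as before. Edges of $H$ not incident to $h$ map identically. An edge $\{h,x\}$ of $H$ in the domain runs between $u$ and $x$, and it maps to $\{v,x\}$, which is the edge $\{h,x\}$ of $G_v \oplus H_h$.

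For the labelling, every vertex of $F$ keeps its $\psi$-label, so $u$ stays odd, and every vertex $x \in V(H)\setminus\{h\}$ is labelled odd. Condition (i) then holds. Fibres over vertices of $G$ other than $v$ are unchanged. The fibre over $v=h$ is $\psi^{-1}(v)$, whose odd count is unchanged. The fibre over each new vertex $x$ is $\{x\}$, with one odd vertex.

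For condition (ii) we split the edges of $G_v \oplus H_h$ into two types. For an edge $e \in E(G)$, the preimage $\psi_+^{-1}(e)$ equals $\psi^{-1}(e)$, and by the remark after Definition~\ref{def:oddomorphisms} the relevant degrees only count edges $f$ with $\psi_+(f) = e$. Such edges all come from $F$, because edges of $H$ map to edges of $H$, which are distinct from edges of $G$ since $G$ and $H$ share only $v=h$. So the degrees match those under $\psi$. For an edge $e=\{x,y\}$ of $H$, its preimage consists of the preimages of $x$ and $y$, where a preimage of $h$ is the whole fibre $\psi^{-1}(v)$. The edges mapped onto $e$ are exactly the single copy of $e$ in $H$, which is incident to $u$ if $h$ is an endpoint.

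This is where the hypothesis is used. Every odd vertex in these fibres is incident to exactly one edge mapping to $e$: the odd vertices are $x$, $y$, and, if $h \in e$, the vertex $u$. Every other vertex of $\psi^{-1}(v)$ is even and is incident to zero such edges. The one subtle point is making sure no $F$-edge maps to an $H$-edge and that the glued vertex is indeed $u$ and not another fibre element; both follow from the construction. The verification is then complete.
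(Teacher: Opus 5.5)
Your proof is correct and follows the paper's argument: you label every new vertex from $H$ odd, keep the $\psi$-labels on $F$, and check both conditions fibre by fibre, with the uniqueness of the odd vertex $u$ in $\psi^{-1}(v)$ guaranteeing that the other (even) vertices of that fibre have degree $0$ in the preimage of each edge of $H$ at $h$. Your version is more explicit than the paper's, since you also verify the homomorphism property and that no edge of $F$ maps to an edge of $H$, but the route is the same.
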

\begin{proof}
    In the new oddomorphism, all of the new elements added from $H$ in $F_u \oplus H_h$ are labelled odd, and the others keep their labels. 
    The number of odd vertices in a fibre $\psi_+^{-1}(a)$ for $a \in V(G)$ is unaffected by the gluing. Likewise, for $e \in E(G)$, the induced subgraph of $F_u \oplus H_h$ on $\psi_+^{-1}(e)$ is the same graph as the induced subgraph of $F$ on $\psi^{-1}(e)$. The degree condition therefore also holds for these edges.
\par 
    Each fibre $\psi_+^{-1}(a)$ for $a \in V(H) \setminus \{h\}$ contains precisely one element, so the condition that each fibre contains an odd number of odd elements is also satisfied for these vertices. To see that the degree condition is satisfied on the $\psi_+$-preimages of edges $e$ in $H$ we note that the induced subgraph of $F_u \oplus H_h$ on $\psi_+^{-1}(e)$ is just a single edge between the corresponding odd vertices. So it is clear that each odd vertex has degree 1 in this graph while an even vertex has degree 0.
\end{proof}

\begin{figure}[ht]
    \centering
    \begin{tikzpicture}[
    vertex/.style={circle, draw=black, thick, minimum size=14pt, inner sep=0pt, font=\small},
    odd/.style={vertex, fill=blue!10},
    even/.style={vertex, fill=white},
    edge/.style={thick, black},
    bag/.style={ellipse, draw=gray, dashed, inner sep=8pt},
]

    \node[odd]  (Ao)  at (0.6, 5.2)  {o};
    \node[even] (Ae1) at (0, 4)     {e};
    \node[even] (Ae2) at (0.8, 4)   {e};
    \node[bag, fit=(Ao) (Ae1) (Ae2)] (BagA) {};

    \node[odd]  (Bo)  at (-3, 0.5)   {o};
    \node[even] (Be1) at (-2.5, 2)  {e};
    \node[even] (Be2) at (-1.5, 1)  {e};
    \node[bag, fit=(Bo) (Be1) (Be2)] (BagB) {};

    \node[odd]  (Co)  at (3, 0.5)    {o};
    \node[even] (Ce1) at (2, 0)      {e};
    \node[bag, fit=(Co) (Ce1)] (BagC) {};

    \node[odd] (C4a) at (5.2, 0.5)  {o};
    \node[odd] (C4b) at (5.2, -1.7)   {o};
    \node[odd] (C4c) at (3, -1.7)     {o};
    
    \node[bag, fit=(C4a)] {};
    \node[bag, fit=(C4b)] {};
    \node[bag, fit=(C4c)] {};

    \draw[edge] (Ao)  -- (Be1);
    \draw[edge] (Be1) -- (Ae1);
    \draw[edge] (Ae1) -- (Be2);
    \draw[edge] (Be2) -- (Ae2);
    \draw[edge] (Ae2) -- (Bo);

    \draw[edge] (Bo)  -- (Ce1);
    \draw[edge] (Ce1) -- (Be2);
    \draw[edge] (Be2) -- (Co);

    \draw[edge] (Co) -- (Ao);

    \draw[edge] (Co)  -- (C4a);
    \draw[edge] (C4a) -- (C4b);
    \draw[edge] (C4b) -- (C4c);
    \draw[edge] (C4c) -- (Co);

\end{tikzpicture}
    \caption{The oddomorphism obtained by gluing $C_4$ to the oddomorphism in Figure \ref{fig:oddo_example}}
    \label{fig:oddo_glued_example}
\end{figure}

We can now deduce the main technical lemma of this section. It shows that when looking for an example answering Question \ref{q} in the negative, we can impose additional constraints on the minor map. Namely, it needs to coincide with the oddomorphism on odd vertices that do not share a fibre with other odd vertices.
\begin{lemma}
\label{lemma:labelled-minors}
    Let $F$ and $G$ be connected graphs, $\psi: F \to G$ be an oddomorphism, $S$ be a set of vertices in $G$ such that the $\psi$-fibres of each element of $S$ contain exactly one $\psi$-odd vertex, and $\rho: S \to V(F)$ be the function that picks out this unique $\psi$-odd vertex.
    Then there exist graphs $F_+,G_+$ that extend $F,G$ and an oddomorphism $\psi_+: F_+\to G_+$ such that $G_+$ is a minor of $F_+$ if and only if there is a partial function $\pi: V(F) \rightharpoonup V(G)$ witnessing the fact that $G$ is a minor of $F$ which satisfies the condition that $\pi(\rho(s)) = s$ for each $s \in S$.
\end{lemma}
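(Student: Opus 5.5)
The plan is to combine Lemma \ref{lemma:rooted-minors} with Lemma \ref{lemma:oddo_extension}. Enumerate $S=\{v_1,\ldots,v_n\}$ and put $w_i\coloneqq\rho(v_i)$. These $w_i$ are distinct, since they lie in distinct fibres of $\psi$. Let $D$ be the sum of the degrees in $F$. Take $F_+$ and $G_+$ to be exactly the graphs built in the proof of Lemma \ref{lemma:rooted-minors}, that is, attach $D\cdot 2^i$ pendant vertices to $w_i$ in $F$ and to $v_i$ in $G$. Lemma \ref{lemma:rooted-minors} then gives the equivalence directly: $G_+$ is a minor of $F_+$ if and only if some minor witness $\pi$ satisfies $\pi(w_i)=v_i$, which is the condition $\pi(\rho(s))=s$. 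What remains is to build the oddomorphism $\psi_+:F_+\to G_+$.

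For this, attaching $D\cdot 2^i$ pendant vertices to $v_i$ is the same as gluing the star $H_i=K_{1,D\cdot 2^i}$ at its centre $h_i$, that is, forming $G_{v_i}\oplus (H_i)_{h_i}$. In $F$ the same star is glued at $w_i=\rho(v_i)$. By hypothesis $w_i$ is the unique $\psi$-odd vertex of $\psi^{-1}(v_i)$, so Lemma \ref{lemma:oddo_extension} applies and extends $\psi$ to an oddomorphism $F_{w_1}\oplus H_1\to G_{v_1}\oplus H_1$.

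We then iterate over $i=2,\ldots,n$. Before each step we must check that the hypothesis of Lemma \ref{lemma:oddo_extension} still holds for $v_i$ with respect to the extended oddomorphism. This is the only point needing care. In the proof of Lemma \ref{lemma:oddo_extension}, the new vertices get new singleton fibres, the fibres over old vertices of $G$ are unchanged, and old vertices keep their labels. So $\psi_+^{-1}(v_i)$ is still $\psi^{-1}(v_i)$, with $w_i$ as its unique odd vertex.

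It is also worth noting that $F$ and $G$ must be connected to invoke Lemma \ref{lemma:rooted-minors}, and this is given. The constructed $F_+$ and $G_+$ are literally the graphs of that lemma, so its conclusion transfers verbatim.
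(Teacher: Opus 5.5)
Your proposal is correct and is the paper's proof. Like the paper, you take $F_+,G_+$ to be the star-glued graphs from the proof of Lemma~\ref{lemma:rooted-minors} with $w_i=\rho(v_i)$, extend $\psi$ by repeatedly applying Lemma~\ref{lemma:oddo_extension}, and conclude by Lemma~\ref{lemma:rooted-minors}. You also spell out two points the paper leaves implicit: that the $w_i$ are distinct, and that earlier gluings leave the fibres over the remaining $v_i$ and their labels unchanged.
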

\begin{proof} 
    Observe that in the proof of Lemma \ref{lemma:rooted-minors} the graphs $F_+$ and $G_+$ were constructed by gluing star graphs to $F$ and $G$, respectively. It therefore follows from Lemma \ref{lemma:oddo_extension} that an oddomorphism $F \to G$ can be extended to an oddomorphism from $F_+$ to $G_+$, where $F_+$ and $G_+$ are constructed as in Lemma \ref{lemma:rooted-minors} with $S = \{v_1,\ldots, v_n\}$ and $w_i = \rho(v_i)$. The result now follows from Lemma \ref{lemma:rooted-minors}.
\end{proof}

We now construct graphs $A$ and $B$ that will form the foundation of our example.
We define $B$ as the gluing of two cycles of length 5 at a single vertex (see Figure \ref{fig:G}). More specifically, it consists of vertices $\{c, v_1, v_2, v_3, v_4, w_1, w_2, w_3, w_4\}$ where $E(B) = \{ \{v_i, v_{i+1 \mod 5}\} \mid i \in \{0,\ldots, 4\}\} \cup \{ \{w_i, w_{i+1 \mod 5}\} \mid i \in \{0,\ldots, 4\}\}$ where $ v_0 = c = w_0$. 

\begin{figure}[htbp]
    \centering
    \begin{tikzpicture}[
    vertex/.style={circle, draw, fill=black, inner sep=2pt},
    thick
]

    \def\radius{1.5cm} 

    \node[vertex] (center) at (0,0) {};

    \foreach \angle [count=\i] in {72, 144, 216, 288} {
        \node[vertex] (L\i) at ($(-\radius,0) + (\angle:\radius)$) {};
    }
    
    \draw (center) -- (L1) -- (L2) -- (L3) -- (L4) -- (center);

    \foreach \angle [count=\i] in {252, 324, 36, 108} {
        \node[vertex] (R\i) at ($(\radius,0) + (\angle:\radius)$) {};
    }

    \draw (center) -- (R1) -- (R2) -- (R3) -- (R4) -- (center);

\end{tikzpicture}
    \caption{The graph $B$.}
    \label{fig:G}
\end{figure}
\par 
The graph $A$, along with the oddomorphism $\varphi:A \to B$, is constructed by making each fibre consist of a single odd and a single even vertex, and each bipartite graph between such fibres a path of length 3 starting and ending at the odd vertices (see Figure \ref{fig:F}). More specifically, for a vertex $u \in V(B)$ there are two vertices $(u,o), (u,e) \in V(A)$. If $\{u,t\} \in E(B)$ then $\{(u,o), (t,e)\}, \{(t,e), (u,e)\}, \{(u,e), (t,o)\} \in E(A)$. Then we can define $\varphi: (u,x) \mapsto u$ where $x \in \{o,e\}$.

\begin{figure}[htbp]
    \centering
    \begin{tikzpicture}[
    v_odd/.style={circle, draw=blue!80!black, fill=blue!20, inner sep=2pt, minimum size=6mm, font=\tiny},
    v_even/.style={rectangle, draw=red!80!black, fill=red!20, inner sep=2.5pt, minimum size=5mm, font=\tiny},
    edge/.style={draw=black!70, thin}
]

    \def\cycleRadiusOdd{1.8cm}  
    \def\cycleRadiusEven{1.2cm} 
    \def\shift{2.2cm}           

    \newcommand{\connectbags}[4]{
        \draw[edge] (#1) -- (#4);
        \draw[edge] (#4) -- (#2);
        \draw[edge] (#2) -- (#3);
    }

    \node[v_odd] (C_o) at (0, 0.5) {};
    \node[v_even] (C_e) at (0, -0.5) {};

    
    \foreach \angle [count=\i] in {72, 144, 216, 288} {
        \coordinate (origin) at (-\shift,0);
        \node[v_odd]  (L\i_o) at ($(origin) + (\angle:\cycleRadiusOdd)$) {};
        \node[v_even] (L\i_e) at ($(origin) + (\angle:\cycleRadiusEven)$) {};
    }

    \connectbags{C_o}{C_e}{L1_o}{L1_e}
    \connectbags{L1_o}{L1_e}{L2_o}{L2_e}
    \connectbags{L2_o}{L2_e}{L3_o}{L3_e}
    \connectbags{L3_o}{L3_e}{L4_o}{L4_e}
    \connectbags{L4_o}{L4_e}{C_o}{C_e}

    
    \foreach \angle [count=\i] in {252, 324, 36, 108} {
        \coordinate (origin) at (\shift,0);
        \node[v_odd]  (R\i_o) at ($(origin) + (\angle:\cycleRadiusOdd)$) {};
        \node[v_even] (R\i_e) at ($(origin) + (\angle:\cycleRadiusEven)$) {};
    }

    \connectbags{C_o}{C_e}{R1_o}{R1_e}
    \connectbags{R1_o}{R1_e}{R2_o}{R2_e}
    \connectbags{R2_o}{R2_e}{R3_o}{R3_e}
    \connectbags{R3_o}{R3_e}{R4_o}{R4_e}
    \connectbags{R4_o}{R4_e}{C_o}{C_e}

\end{tikzpicture}
    \caption{Our graph $A$. The red squares are the even vertices and the blue circles are the odd vertices. The pairs of odd-even vertices that are next to each other in the figure form the fibres of the oddomorphism $\varphi:A \to B$.}
    \label{fig:F}
\end{figure}

The key property of these graphs is the following:
\begin{proposition}
    Let $A$, $B$, and $\varphi$ be defined as in the preceding text. There is no partial function $\pi: V(A) \rightharpoonup V(B)$ which witnesses the fact that $B$ is a minor of $A$ and also coincides with $\varphi$ on the $\varphi$-odd vertices of $A$.
\end{proposition}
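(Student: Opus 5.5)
The plan is to suppose such a $\pi$ exists and reach a contradiction by a counting argument on the even vertices, followed by a short case check. First I record the adjacencies in $A$. For each edge $\{u,t\}$ of $B$, the vertex $(u,o)$ is adjacent only to $(t,e)$; the vertex $(u,e)$ is adjacent to $(t,e)$ and to $(t,o)$. Hence odd vertices are pairwise non-adjacent, no vertex $(x,\cdot)$ is adjacent to a vertex $(y,\cdot)$ with $y$ not a neighbour of $x$ in $B$, and $(x,o)$ is not adjacent to $(x,e)$.

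By assumption $(x,o)\in\pi^{-1}(x)$ for every $x\in V(B)$, so only the $9$ even vertices are left to distribute. I split $V(B)$ into two sets.
\begin{itemize}
\item $S$ is the set of $x$ whose fibre is exactly $\{(x,o)\}$.
\item For $x\notin S$ the fibre is connected and strictly larger than $\{(x,o)\}$. Since every neighbour of $(x,o)$ is some $(t,e)$ with $t\sim x$, the fibre must contain such a $(t,e)$ with $t\neq x$. I say $x$ \emph{steals} the even vertex of $t$.
\end{itemize}

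Next I derive constraints on $S$. Because odd vertices are pairwise non-adjacent, $S$ is independent in $B$. If $x\in S$ and $t\sim x$, the edge $\{x,t\}$ of $B$ can only be realised by an edge $(x,o)$--$(t,e)$, so $(t,e)\in\pi^{-1}(t)$. Thus vertices of $N(S)$ keep their own even vertex and it cannot be stolen.

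Each of the $9-|S|$ vertices outside $S$ steals a distinct even vertex, because fibres are disjoint. None of these stolen vertices belongs to $N(S)$. This gives $9-|S|\le 9-|N(S)|$, that is, $|N(S)|\le|S|$.

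I then check that every nonempty independent set $S$ of $B$ satisfies $|N(S)|>|S|$. The reasons are that every vertex has degree at least $2$, each $5$-cycle has independence number $2$ with neighbourhoods of size at least $3$, and $c$ has degree $4$. This forces $S=\emptyset$. Then every vertex steals exactly one even vertex, all $9$ even vertices are used, and each fibre is exactly $\{(x,o),(f(x),e)\}$. Here $f:V(B)\to V(B)$ is a bijection with $f(x)\sim x$ for all $x$.

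Such an $f$ decomposes into $2$-cycles along edges and directed cycles of $B$. The only cycles of $B$ are the two $5$-cycles, and $|V(B)|=9$ is odd. So, up to symmetry and orientation, $f$ rotates one $5$-cycle and swaps the two pairs $\{v_1,v_2\}$, $\{v_3,v_4\}$ (say) of the other cycle's path.

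Finally I check the edge condition. For adjacent $x,y$ there must be an $A$-edge between $\{(x,o),(f(x),e)\}$ and $\{(y,o),(f(y),e)\}$. By the adjacency description this requires $f(y)\sim x$, $f(x)\sim y$, or $f(x)\sim f(y)$. $B$ is triangle-free, and $f$ is the swap on $\{v_1,v_2\}$, so $f(v_2)=v_1\not\sim v_3$ and $f(v_3)=v_4\not\sim v_2$. Therefore the edge $\{v_2,v_3\}$, which joins the two swapped pairs, must be realised by $f(v_2)\sim f(v_3)$, i.e.\ $v_1\sim v_4$. That is false in $B$, which is the contradiction.

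I expect the main obstacle to be the expansion claim $|N(S)|>|S|$ together with the rigorous reduction to the bijection $f$. In particular I must make sure a fibre cannot contain extra even vertices in a way that breaks the counting; this is handled because the count already shows all $9$ even vertices are consumed.
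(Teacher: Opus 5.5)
The step that fails is ``if $x\in S$ and $t\sim x$, the edge $\{x,t\}$ of $B$ can only be realised by an edge $(x,o)$--$(t,e)$, so $(t,e)\in\pi^{-1}(t)$.'' The vertex $(x,o)$ is adjacent to $(s,e)$ for \emph{every} neighbour $s$ of $x$. Nothing local prevents $\pi^{-1}(t)$ from containing such an $(s,e)$ with $s\neq t$. For instance, take $x=c$ and $t=v_1$. The set $\{(v_1,o),(c,e),(v_4,e)\}$ is connected in $A$, and the edge $(c,o)$--$(v_4,e)$ would realise $\{c,v_1\}$ while $(v_1,e)$ lies in some other fibre. Ruling this out requires the kind of global reasoning the proposition is about. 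So, as written, your claims that the stolen vertices avoid $\{(t,e) : t\in N(S)\}$, and hence that $|N(S)|\le|S|$, are unsupported. Some replacement is genuinely needed. The weaker bound of at least one even vertex per fibre outside $S$ gives no contradiction for $S=\{c\}$.

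Your inequality is nevertheless true, and the repair is short. What the count needs is that every fibre $\pi^{-1}(t)$ with $t\in N(S)$ contains at least \emph{two} even vertices. To see this, pick $x\in S$ with $x\sim t$. Realising the edge $\{x,t\}$ puts some $(s,e)$ with $s\sim x$ into $\pi^{-1}(t)$. Since $B$ is triangle-free, either $s=t$ or $s\not\sim t$. In both cases $(s,e)$ is not adjacent to $(t,o)$, so connectivity of the fibre forces some neighbour $(u,e)$ of $(t,o)$ into it. This gives $u\sim t$ and hence $u\neq s$. As $S\cap N(S)=\emptyset$, counting even vertices gives $2|N(S)|+(9-|S|-|N(S)|)\le 9$, that is, $|N(S)|\le|S|$.

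With this repair, the rest of your argument is correct: the expansion bound, the reduction to a bijection $f$ with $f(x)\sim x$, and the failure of the edge $\{v_2,v_3\}$. It then gives a counting-based alternative to the paper's proof. The paper instead splits on whether $\pi((c,e))=c$ and traces connectivity around a $5$-cycle.
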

\begin{proof}
    This is relatively simple to check by hand. Assume $\pi$ is such a partial function. To derive a contradiction, we examine two cases. For the remainder of this proof, a fibre is always a fibre of $\pi$.\begin{itemize}
        \item Assume $(c, e)$ is mapped to $c$ under $\pi$. Since $\pi^{-1}(c)$ is connected and it contains $(c,o)$ it must include another even vertex. Assume without loss of generality that it contains $(v_1, e)$.  
        Now, $(v_1,o) \in \pi^{-1}(v_1)$ and this fibre must be adjacent to the fibre containing $(v_2,o)$, so $(v_2,e)$ must be contained in either $\pi^{-1}(v_1)$ or $\pi^{-1}(v_2)$.
        In both cases $(v_3,e)$ must be contained in the same fibre as $\pi^{-1}(v_2)$.
        But then $\pi^{-1}(v_4)$ can only consist of $(v_4,o)$ and is therefore not adjacent to $\pi^{-1}(v_3)$, a contradiction.

        \item Assume $(c, e)$ is not mapped to $c$ under $\pi$.
        Now, assume without loss of generality that $\pi((c,e)) = w_i$ for some $i \in \{1,2,3,4\}$. Since $\pi^{-1}(c)$ must be adjacent to both $\pi^{-1}(v_1)$ and $\pi^{-1}(v_4)$ it follows that $(v_1,e)$, $(v_2,e)$, $(v_3,e)$, and $(v_4,e)$ must all be contained in $\pi^{-1}(c) \cup \pi^{-1}(v_1) \cup \pi^{-1}(v_4)$ (since the only valid paths from $(c,o)$ to $(v_1,o)$ and $(v_4,o)$ go through these vertices). But then $\pi^{-1}(v_2)$ and $\pi^{-1}(v_3)$ are not adjacent, a contradiction.
    \end{itemize}

\end{proof}

It now follows from Lemma \ref{lemma:labelled-minors} that we can construct an oddomorphism $\varphi_+: A_+ \to B_+$ where $B_+$ is not a minor of $A_+$. We have thus resolved Question \ref{q} by showing:
\begin{theorem}
    There exists a pair of graphs $F$, $G$ such that there is an oddomorphism $F \to G$ but $G$ is not a minor of $F$.
\end{theorem}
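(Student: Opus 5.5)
The plan is to assemble the theorem from the pieces already established in this section, taking $F = A_+$ and $G = B_+$ with the oddomorphism $\varphi_+$. There are four steps.

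First, I check that $\varphi: A \to B$ really is an oddomorphism with the stated labelling. Each fibre $\varphi^{-1}(u) = \{(u,o),(u,e)\}$ contains exactly one odd vertex, so condition (i) holds. For an edge $\{u,t\} \in E(B)$, the induced subgraph of $A$ on $\varphi^{-1}(\{u,t\})$ is the path
\[
(u,o),\ (t,e),\ (u,e),\ (t,o).
\]
In this path the two odd vertices are the endpoints and have degree $1$, and the two even vertices are interior and have degree $2$. The only subtlety is to confirm that no other edges of $A$ lie inside this fibre pair. This holds because every edge of $A$ is created for a specific edge of $B$, and $B$ is simple. I also check that $A$ and $B$ are connected, which is immediate from the construction.

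Second, I apply Lemma \ref{lemma:labelled-minors} with $S = V(B)$. Every fibre has exactly one odd vertex, so $\rho(u) = (u,o)$. The lemma then produces graphs $A_+$, $B_+$ and an oddomorphism $\varphi_+: A_+ \to B_+$. It also guarantees that $B_+$ is a minor of $A_+$ if and only if some minor witness $\pi: V(A) \rightharpoonup V(B)$ satisfies $\pi((u,o)) = u$ for all $u$. That condition says exactly that $\pi$ coincides with $\varphi$ on the $\varphi$-odd vertices.

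Third, I invoke the preceding Proposition, which states that no such $\pi$ exists. Combining this with the equivalence from the second step, $B_+$ is not a minor of $A_+$, and yet $\varphi_+$ is an oddomorphism between them.

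The main obstacle is the Proposition itself, specifically its case analysis. The concern is whether the first case truly forces $(v_3,e)$ into $\pi^{-1}(v_2)$ and whether the symmetry assumptions are valid. Since the Proposition is stated earlier, I would take it as given and only double-check the step from the Proposition to Lemma \ref{lemma:labelled-minors}: the lemma's requirement of a witness with $\pi(\rho(s)) = s$ for all $s \in S = V(B)$ is literally the hypothesis the Proposition rules out.
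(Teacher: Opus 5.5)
Your proposal is correct and follows the paper's argument: take $F = A_+$ and $G = B_+$, apply Lemma \ref{lemma:labelled-minors} with $S = V(B)$ and $\rho(u) = (u,o)$, and conclude from the preceding Proposition that no such rooted minor witness exists. Your explicit checks that $\varphi$ is an oddomorphism and that $A$ and $B$ are connected fill in details the paper leaves implicit. The paper also gives a second, probabilistic example via odd random lifts of $K_n$, but that is not needed for the statement.
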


This therefore shows that the class of graphs obtained by excluding $G = B_+$ as a minor is not closed under oddomorphisms while still being minor closed.
Additionally, $B$ is planar and $B_+$ is obtained by adding leaves to $B$, so $B_+$ is also planar. It then follows from a celebrated result by Robertson and Seymour \cite{ROBERTSON198692} that this class of graphs that exclude $B_+$ as a minor has bounded treewidth.
Thus the statement that a minor closed class is closed under oddomorphisms fails even for classes of bounded treewidth.
This observation was made by Tim Seppelt (personal communication).

\subsection{Second example: Random lifts of cliques}
The second example follows from work by Drier and Linial on the Hadwiger number of random lifts of complete graphs \cite{Drier2004Minors}. \par 
For a positive integer $l$, an \emph{$l$-lift} of a graph $G$ is a graph with vertex set $V(G) \times [l]$ and an edge set consisting of perfect matchings between $\{u\} \times [l]$ and $\{v\} \times [l]$ for each $\{u,v\} \in E(G)$. We say that $F$ is a \emph{lift} of $G$ if it is an $l$-lift of $G$ for some positive integer $l$.
\par 
The key observation explaining the relevance of this work for our purposes is that odd lifts are examples of oddomorphisms.
\begin{proposition}
    Let $G_l$ be an $l$-lift of a graph $G$ where $l$ is an odd number. Then the projection $V(G) \times [l] \to V(G)$ is an oddomorphism $G_l \to G$.
\end{proposition}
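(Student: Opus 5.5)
The plan is to label every vertex of $G_l$ odd and check the two conditions of Definition \ref{def:oddomorphisms} directly for the projection $p: V(G)\times[l] \to V(G)$, $(u,i)\mapsto u$.

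\textbf{Step 1: $p$ is a homomorphism.} Every edge of $G_l$ lies in the perfect matching between $\{u\}\times[l]$ and $\{v\}\times[l]$ for some edge $\{u,v\}\in E(G)$. Hence it is mapped to $\{u,v\}$. In particular no edge of $G_l$ joins two vertices of the same fibre, so no edge collapses to a loop.

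\textbf{Step 2: condition (i).} Each fibre $p^{-1}(u)=\{u\}\times[l]$ contains exactly $l$ vertices, all labelled odd. Since $l$ is odd, every fibre has an odd number of odd vertices.

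\textbf{Step 3: condition (ii).} Fix $e=\{u,v\}\in E(G)$. By the remark after Definition \ref{def:oddomorphisms}, it suffices to count, for each vertex of $p^{-1}(e)$, the edges $f$ incident to it with $p(f)=e$.
\begin{itemize}
\item The edges mapping to $e$ are exactly those of the perfect matching between $\{u\}\times[l]$ and $\{v\}\times[l]$.
\item In a perfect matching every vertex has degree exactly $1$.
\item So every vertex of $p^{-1}(e)$ has degree $1$ in the induced subgraph on $p^{-1}(e)$, which is odd, as required for odd vertices.
\end{itemize}
Since there are no even vertices, there is nothing further to check.

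There is no real obstacle; the only care needed is in Step 3. One must note that the induced subgraph on $p^{-1}(e)$ contains only the matching edges. This holds because $G$ is simple and lifts place no edges inside a fibre.
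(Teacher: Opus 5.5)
Your proposal is correct and follows essentially the same approach as the paper: label every vertex odd, use that $l$ is odd for the fibre condition, and use that the preimage of each edge is a perfect matching for the degree condition. Your extra check that no lift edge lies inside a fibre, so the induced subgraph on $p^{-1}(e)$ is exactly the matching, is a detail the paper treats as clear.
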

\begin{proof}
    The projection is clearly a homomorphism.
    We let every vertex of $G_l$ be odd. Since $l$ is odd, this means that there is an odd number of odd vertices in each fibre of the projection. The inverse image of an edge in $G$ is, by definition, a perfect matching in $G_l$. Therefore each vertex has degree 1 in that subgraph and thus they all satisfy the degree condition.
\end{proof}

A \emph{random $l$-lift} of a graph $G$ is an $l$-lift of $G$ such that for each $\{u,v\} \in E(G)$ the perfect matching between $\{u\} \times [l]$ and $\{v\} \times [l]$ is chosen uniformly at random.

\begin{lemma}[Lemma 3.8 in \cite{Drier2004Minors}]
\label{lemma:Drier}
    For every $\delta,\varepsilon>0$ there exists $n$ large enough such that for each $2\leq l \leq (\frac14 - \varepsilon) \log n$ the probability that a random $l$-lift of $K_n$ contains $K_n$ as a minor is less than $\delta$.
\end{lemma}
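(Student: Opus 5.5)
The plan is a first-moment (union bound) argument over all possible minor models, restricted to the small branch sets. Let $F$ be a random $l$-lift of $K_n$, so $|V(F)| = nl$. Suppose $F$ contains $K_n$ as a minor, witnessed by disjoint connected branch sets $X_1,\ldots,X_n$. Since $\sum_i |X_i| \le nl$, at least $n/2$ of the branch sets have size at most $2l$. So it suffices to show the following: with probability at least $1-\delta$, there is no family of $m := \lceil n/2\rceil$ disjoint connected vertex sets, each of size at most $2l$, that are pairwise joined by an edge of $F$.

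\textbf{Step 1 (counting candidate models).} Each connected $X_i$ with $|X_i| \le 2l$ contains a spanning tree. I will enumerate the pair $(X_i, T_i)$ instead of $X_i$ alone. The number of trees of size $s$ in $F$ rooted at a given vertex is at most $(e\Delta)^{s}$, where $\Delta = n-1$ is the maximum degree. The family therefore has at most
\[
\bigl(nl\cdot (en)^{2l}\bigr)^{m} \le \exp\bigl(O(n\, l\log n)\bigr)
\]
choices. To avoid depending on the random edges when counting, I will count trees in the complete "potential lift graph": each vertex $(u,a)$ is joined to every $(v,b)$ with $u\ne v$. This graph has degree $(n-1)l$, and the count is still $\exp(O(nl\log n))$.

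\textbf{Step 2 (probability that a fixed candidate is realised).} Fix the tree edges. Every tree edge must be present in $F$, and every pair $X_i, X_j$ must be joined by some edge. I condition on the presence of the at most $2lm \le nl$ tree edges. For a pair $(i,j)$, the potential edges between $X_i$ and $X_j$ are the pairs $((u,a),(v,b))$ with $u\neq v$. There are at most $4l^2$ of them, and each is present with probability roughly $1/l$ given the conditioning, since a random perfect matching is only mildly perturbed by fixing a few of its edges. Heuristically, then,
\[
\Pr[X_i, X_j \text{ not adjacent}] \gtrsim (1-1/l)^{4l^2} \approx e^{-4l} \ge n^{-1+4\varepsilon},
\]
where the last inequality uses $l \le (\tfrac14-\varepsilon)\log n$. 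Over the roughly $n^2/8$ pairs, the expected number of non-adjacent pairs is about $n^{1+4\varepsilon}$. If we had enough independence, the probability that all pairs are adjacent would be at most $\exp(-c\,n^{1+4\varepsilon})$. This beats the $\exp(O(n\log^2 n))$ from Step 1, and letting $n$ be large gives failure probability below $\delta$.

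\textbf{Main obstacle: dependence.} The events "no edge between $X_i$ and $X_j$" are not independent. The matchings for a fixed base edge $\{u,v\}$ are shared among many pairs, and the conditioning on tree edges skews them further. The first thing I would try is to expose the matchings one base edge $\{u,v\}$ at a time, and to use the fact that the absence events for disjoint sets of potential edges in a uniform random perfect matching are positively correlated. Equivalently, the presence indicators are negatively associated, so one can use Chernoff-type bounds for negatively associated variables, or a Janson/Suen-type inequality. This should show that the number of adjacent pairs is concentrated below its maximum $\binom m2$ with probability $1-\exp(-\Omega(n^{1+4\varepsilon}))$.

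If that fails, there is a fallback. Pick a large subfamily of pairs $(i,j)$ whose projections to $K_n$ use pairwise disjoint sets of base edges; such pairs are genuinely independent. Each $X_i$ projects onto at most $2l$ base vertices, so a greedy selection should still leave about $n^2/\mathrm{poly}(l)$ such pairs. The expected number of non-adjacent pairs among them is then still about $n^{1+4\varepsilon}/\mathrm{poly}(\log n)$, which suffices to overwhelm the union bound.
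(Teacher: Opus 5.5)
The paper does not prove this lemma; it quotes it from Drier and Linial. So I am judging your argument on its own terms. It has a genuine gap in Step~2, and that gap also makes your initial reduction unusable.

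\textbf{Step 2 fails for a single pair.} The estimate $\Pr[X_i,X_j\text{ not adjacent}]\gtrsim(1-1/l)^{4l^2}$ treats the potential edges between $X_i$ and $X_j$ as independent. But all potential edges over one base edge $\{u,v\}$ come from a single uniform perfect matching between $\{u\}\times[l]$ and $\{v\}\times[l]$. Write $a=|X_i\cap(\{u\}\times[l])|$ and $b=|X_j\cap(\{v\}\times[l])|$. The probability that none of these edges is present is $\binom{l-a}{b}/\binom{l}{b}\le(1-a/l)^b\le(1-1/l)^{ab}$, and it is exactly $0$ once $a+b>l$. The negative association you invoke bounds this probability from above, not from below. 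It does handle what you call the main obstacle: adjacency indicators of distinct pairs are increasing functions of disjoint sets of edge indicators, so $\Pr[\text{all pairs adjacent}]\le\prod_{i<j}(1-p_{ij})$, where $p_{ij}$ is the non-adjacency probability. That bound is useless without lower bounds on the $p_{ij}$, and those do not exist uniformly over your candidates.

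\textbf{The reduced statement is false.} Every vertex outside a fibre $\{u\}\times[l]$ has exactly one neighbour in it. So take a tree of $\lfloor l/2\rfloor+1$ connector vertices whose neighbours in that fibre are distinct, and hang those neighbours off it. This gives a connected set of size $2\lfloor l/2\rfloor+2\le 2l$ containing a majority of the fibre. Two such sets over different fibres are adjacent in \emph{every} $l$-lift, because the perfect matching between the two fibres sends one majority onto a set that must meet the other.

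For $l=3$ these sets are paths $p-z-y-q$ with $p\neq q$ in a common fibre. You can find many disjoint ones as follows:
\begin{enumerate}[(a)]
\item Take a matching $\{z_iy_i\}$ of size $\lceil n/2\rceil$ in the part of the lift over half of the base vertices.
\item Assign to each edge a fibre from the other half by Hall's theorem. Each edge--fibre pair is good with probability $2/3$, and there is enough independence for concentration.
\end{enumerate}
This yields, with high probability, $\lceil n/2\rceil$ disjoint, pairwise adjacent such sets. So the family you want to exclude typically exists. For such candidates the conditional probability that all pairs are adjacent is $1$, and the union bound collapses.

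Your fallback of choosing pairs with disjoint base-edge projections only addresses correlations between pairs. It does not address the vanishing of individual $p_{ij}$. A correct argument along these lines must control how much of a single fibre a branch set can occupy, for example by working only with branch sets that cannot contain a majority of any fibre. It must also estimate non-adjacency with the exact hypergeometric expression above. Note that the constant $\tfrac14$ your heuristic reproduces comes precisely from the independence approximation $e^{-(2l)^2/l}$, so it does not confirm the argument.
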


In particular, we can pick $l$ as a constant, for example $l =3$. It then follows from Lemma \ref{lemma:Drier} that for large $n$, almost all of the random $3$-lifts of $K_n$ do not contain $K_n$ as a minor. 
As these lifts have an oddomorphism to $K_n$, this gives an infinite class of examples confirming the negative answer to Question \ref{q}.

\section{The Strong Roberson Conjecture for split-off minors is true}
\label{sec:oddo-means-som}
In the previous section we showed that the existence of an oddomorphism from $F$ to $G$ is not enough to ensure that $F$ contains $G$ as a minor. In this section we show that a slightly weaker relation between $F$ and $G$ holds. Namely, that $F$ contains $G$ as a split-off minor.

\begin{definition}
\label{def:split-off-minors}
    A multigraph $H$ is a \emph{split-off minor} of a multigraph $G$ if $H$ can be obtained from $G$ by applying the operations of taking subgraphs, contracting edges, and splitting off edges (both considered as multigraph operations).
\end{definition}

In particular, if $G$ and $H$ are simple graphs, we say that $G$ contains $H$ as a split-off minor if $H$ can be obtained from $G$ by applying these operations as multigraph operations. Thus the graphs in the intermediate steps can have multiple edges.
\par 
The split-off minor relation is a natural one to consider when studying classes closed under such operations. Indeed, a class of multigraphs is closed under taking split-off minors if and only if it is closed under both minors and immersions. See Proposition \ref{prop:split-off-closure} for further details.

\par 
The split-off minor relation has not been studied previously in the literature. 
Golovach et al. \cite{GOLOVACH2014286} examined a very similar relation which they call the \emph{lift-minor} relation.
Their relation is defined using the same operation. However, they restrict themselves only to simple graphs. Additionally, they consider the operations as simple graph operations, so multiple edges are always merged into single edges when they arise and loops are always deleted.
\par

The goal of this section is to prove Theorem \ref{thm:oddo-implies-lm}.
We begin by proving it in the case where $F$ is a lift of $G$.

\begin{lemma}
\label{lemma:perfect-matching-lm}
    If $F$ is a lift of a graph $G$ then $G$ is a split-off minor of $F$.
\end{lemma}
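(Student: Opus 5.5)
We argue by induction on $l$, where $F$ is an $l$-lift of $G$. In each step we remove one "layer" of $F$ using contractions, split-offs and deletions, and what remains is an $(l-1)$-lift of $G$. For $l=1$ the graph $F$ is isomorphic to $G$, so there is nothing to prove. Since the split-off minor relation is transitive (compose the sequences of operations), it suffices to show that every $l$-lift $F$ of $G$ with $l\ge 2$ contains some $(l-1)$-lift of $G$ as a split-off minor. For each edge $\{u,v\}\in E(G)$, write $\sigma_{uv}$ for the bijection $[l]\to[l]$ given by the perfect matching between $\{u\}\times[l]$ and $\{v\}\times[l]$, so that $(u,i)$ is adjacent to $(v,\sigma_{uv}(i))$ and $\sigma_{vu}=\sigma_{uv}^{-1}$.

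\textbf{Relabelling along spanning trees.} First fix a spanning forest $T$ of $G$, one spanning tree per connected component. The lift restricted to the edges of $T$ is a lift of a forest, and every lift of a tree is a disjoint union of $l$ copies of that tree. This follows by induction on the number of vertices of the tree: remove a leaf and extend the copies through the matching at the leaf edge. Applying this to each component, we relabel the vertices within each fibre so that every $\sigma_{uv}$ with $\{u,v\}\in E(T)$ is the identity. A relabelling of fibres produces an isomorphic lift, so we lose nothing. After relabelling, the subgraph $T^{(l)}$ of $F$ induced on the vertices $(u,l)$, together with the tree edges, is a copy of $T$, and it consists of one tree per component of $G$.

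\textbf{Removing layer $l$.} Contract every edge of $T^{(l)}$ as multigraph operations, so that each component of layer $l$ collapses to a single vertex $x_C$. Consider a non-tree edge $\{u,v\}$ of $G$.
\begin{itemize}
\item If $\sigma_{uv}(l)=l$, the edge $(u,l)(v,l)$ becomes a loop at $x_C$, and we delete it.
\item Otherwise let $j=\sigma_{uv}(l)$ and $k=\sigma_{vu}(l)$, so that $j,k\neq l$. After contraction, $x_C$ has the two edges $(v,j)x_C$ and $x_C(u,k)$, both coming from this single edge of $G$. We split them off at $x_C$, which creates the edge $(u,k)(v,j)$.
\end{itemize}
Different non-tree edges of $G$ use different pairs of edges at $x_C$, so all of these split-offs can be carried out independently. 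This is exactly why contracting first is useful: we never need edge-disjoint paths inside layer $l$. Finally, delete the vertices $x_C$.

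\textbf{Checking the result.} It remains to verify that the resulting graph is an $(l-1)$-lift of $G$ on $V(G)\times[l-1]$. Tree edges stay within layers, so they give the identity matchings on $[l-1]$. For a non-tree edge $\{u,v\}$, the old matching with the pairs $(u,l)(v,j)$ and $(u,k)(v,l)$ removed and $(u,k)(v,j)$ added is a perfect matching between $\{u\}\times[l-1]$ and $\{v\}\times[l-1]$, and no multi-edges are created. I expect the main source of care to be the bookkeeping: the tree-lift decomposition, and checking that contraction creates loops only from non-tree edges with $\sigma_{uv}(l)=l$, since $T^{(l)}$ is a forest. Once the spanning-tree relabelling is in place, the rest is routine. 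Isolated vertices of $G$ are handled trivially, because their fibres simply lose the vertex $(u,l)$.
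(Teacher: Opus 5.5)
Your proof is correct, but it takes essentially the opposite route to the paper's.

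The paper does not induct on $l$. It fixes one copy $T_F$ of the spanning forest as the branch vertices. It then uses the fact that the preimage of each fundamental cycle is a disjoint union of cycles to find, for each non-tree edge, a path between the two corresponding vertices of $T_F$. This path avoids $T_F$ internally, and its end edges project onto that non-tree edge. Finally it contracts every edge with both endpoints outside $T_F$ and reads off an immersion of $G$ in a single step, with edge-disjointness coming from the projection.

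You do the dual: contract one copy of the forest to a hub per component, reroute each non-tree matching through the hub with one split-off, obtain an $(l-1)$-lift, and induct.

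\emph{What your route buys.} It is more elementary and entirely local. It needs no connectivity or cycle argument; the only check is that a restricted-and-patched $\sigma_{uv}$ is still a perfect matching. Along the way it shows that every $l$-lift with $l\ge 2$ contains some $(l-1)$-lift of $G$ as a split-off minor. It also handles the fixed-point case $\sigma_{uv}(l)=l$ explicitly. The paper glosses over the analogous situation, where a non-tree edge lifts to an edge inside $T_F$: the resulting path then has length $1$, not the asserted length $2$, which is harmless there.

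\emph{What the paper's route buys.} It avoids induction. In one stroke it shows that $G$ is an immersion, with paths of length at most $2$, of a graph obtained from $F$ by contractions alone.
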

\begin{proof}
    Let $\varphi:F\to G$ be the projection and let $T$ be a full spanning forest in $G$, meaning that each component of it is a spanning tree of a component of $G$. 
    An $l$-lift of a forest consists of $l$ isomorphic copies of that forest. We therefore see that there exists a subforest $T_F$ in $F$ such that $\varphi$ restricts to an isomorphism $T_F \to T$.
    \begin{claim}
    \label{claim:tree-paths}
        For each $x,y \in T_F$ that lie on the same tree in $T_F$ and satisfy $\{\varphi(x), \varphi(y)\} \in E(G) \setminus E(T)$ there exists a path $P_{x,y}$ between $x$ and $y$ that is disjoint from $T_F$ except at its endpoints, and additionally the first and last edge of the path both map to $\{\varphi(x), \varphi(y)\}$ under $\varphi$.
    \end{claim}
    \begin{proofofclaim}
        Assume $x$ and $y$ are as stated above. Note that there is a cycle $C$ in $G$ consisting only of the edge $\{\varphi(x), \varphi(y)\}$ and edges from $T$. The induced subgraph of $F$ on the vertices of $\varphi^{-1}(C)$ is a lift of $C$, so every vertex in it has degree 2 and it is therefore a disjoint union of cycles. Note that the path between $x$ and $y$ which lies in the tree $T_F$ is contained in this subgraph and thus $x$ and $y$ lie on the same cycle $C'$ within $\varphi^{-1}(C)$. Taking the other path between $x$ and $y$ on that cycle gives a path between $x$ and $y$ that is disjoint from $T_F$ except at its endpoints. 
        The first and last edges of this path do not map to $T$ under $\varphi$, since if that were the case $x$ or $y$ would have degree larger than 1 within the preimage of the edge mapped to, contradicting the fact that this preimage is a perfect matching. Thus these first and last edges must map to the only edge of $C$ not in $T$, which is $\{\varphi(x), \varphi(y)\}$.
    \end{proofofclaim}

    We now form a graph $F'$ from $F$ by contracting every edge $e$ both of whose endpoints lie outside of $T_F$. Note that $F'$ is not necessarily a simple graph.
    We prove that $G$ immerses into $F'$. For the branch vertex function we use the inverse of $\varphi$ on $T_F$, that is $(\restr{\varphi}{T_F})^{-1}$. For $e \in E(T)$ we define $P_e$ as the path consisting only of the corresponding edge in $T_F$. For edges $e$ outside of $T$, we define $P_e$ as the path in $F'$ corresponding to the path obtained from Claim \ref{claim:tree-paths}. 
    All that remains is to show that two such paths are edge disjoint. Let $\{x,y\}, \{x',y'\} \in E(G)\setminus E(T)$ be two distinct edges corresponding to such paths. These paths have length 2, since only the first and last edges of a path obtained from Claim \ref{claim:tree-paths} have an endpoint in $T_F$, so the other edges were contracted in the formation of $F'$. The two remaining edges which $P_{\{x,y\}}$ consists of map to $\{\varphi(x), \varphi(y)\}$ and the two edges of $P_{\{x',y'\}}$ map to $\{\varphi(x'), \varphi(y')\}$ under $\varphi$. 
    Since $G$ is simple we can assume without loss of generality that $y' \notin \{x,y\}$. 
    Then, since $\varphi$ is a bijection on $V(T_F)$ we have that $\varphi(y') \notin \{\varphi(x), \varphi(y)\}$ and thus
    \begin{align*}
        \{\varphi(x), \varphi(y)\} \neq \{\varphi(x'), \varphi(y')\}.
    \end{align*}
    It then follows that the set of edges forming $P_{\{x,y\}}$ must be disjoint from the set of edges forming $P_{\{x',y'\}}$. This completes the proof.
\end{proof}

We can now prove the full theorem.

\begin{proof}[Proof of Theorem \ref{thm:oddo-implies-lm}]
    Assume $\varphi$ is an oddomorphism $F \to G$. 
    We can assume that $G$ is connected since the oddomorphism induces an oddomorphism to each component. Additionally, the result is trivial if $G$ is the single vertex graph, so we assume $G$ has at least one edge and thus has no isolated vertices.
    \par 
    For $e \in E(G)$ let $F_e$ be the induced subgraph of $F$ on the vertices in $\varphi^{-1}(e)$. Then $\{E(F_e) \mid e \in E(G)\}$ is a partition of $E(F)$.
    Let $A_e$ be a maximal set of edge-disjoint circuits in $F_e$. We begin by deleting all edges forming a circuit in $A_e$ to form the graph $F_e'$. We then repeatedly split off a maximal length path of length $>1$ within $F_e'$, one at a time, until no such paths are left. Let $F_e''$ be the resulting graph and let $X_e \coloneqq E(F_e'')$.
    We define $F''$ as the graph obtained from the graph $(V(F), \bigcup_{e \in E(G)} X_e)$ by deleting all isolated vertices. Clearly $F''$ is an immersion of $F$ and $\{X_e \mid e \in E(G)\}$ is a partition of the edges of $F''$.
    An example of this construction of $F''$ is illustrated in Figure \ref{fig:oddo-split-minor-example}.
    \begin{claim}
        The vertices of $F''$ are exactly the $\varphi$-odd vertices of $F$ and every vertex $v \in V(F'')$ is incident to exactly one edge in $X_e$ for each $e \in E(G)$ with $v \in V(F_e) = \varphi^{-1}(e)$.
    \end{claim}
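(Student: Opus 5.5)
The plan is to analyse each graph $F_e$ separately, since the sets $X_e$ partition $E(F'')$ and the claim concerns each $e$ on its own. Fix $e=\{a,b\}\in E(G)$. By condition (ii) of Definition \ref{def:oddomorphisms} and the remark after it, every $\varphi$-odd vertex of $\varphi^{-1}(e)$ has odd degree in $F_e$ and every even vertex has even degree. We therefore track how degrees change during the two reduction steps.

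\textbf{Step 1: deleting the circuits.} Removing the edge-disjoint circuits in $A_e$ lowers each degree by an even number, so parities are preserved in $F_e'$. Because $A_e$ is maximal, $F_e'$ contains no circuit and is therefore a forest. In this forest the odd vertices of $F_e$ are exactly the vertices of odd degree, and the even vertices have even degree, which may be zero.

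\textbf{Step 2: splitting off maximal paths.} Splitting off a path replaces it by one edge between its endpoints. Each internal vertex loses $2$ from its degree and each endpoint keeps its degree, so parities are again preserved. Splitting off within a forest produces a forest; a new edge cannot close a cycle, since its endpoints were previously joined only through the removed path. The process stops only when no path of length $>1$ remains. That means every component has at most one edge, so $F_e''$ is a matching together with isolated vertices. In a matching all degrees are $0$ or $1$. By the preserved parity, each odd vertex of $\varphi^{-1}(e)$ has degree exactly $1$ and each even vertex has degree $0$. This gives the second part of the claim: every odd $v\in\varphi^{-1}(e)$ is incident to exactly one edge of $X_e$.

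\textbf{Step 3: identifying $V(F'')$.} Here $V(F'')$ is the set of non-isolated vertices of $(V(F),\bigcup_e X_e)$. An even vertex $v$ has degree $0$ in every $X_e$, because the only $e$ with $v\in V(F_e)$ are those with $\varphi(v)\in e$, and $X_e\subseteq E(F_e)$. So even vertices are not in $F''$. For an odd vertex $v$, $G$ is connected with at least one edge, so $\varphi(v)$ lies in some edge $e$. Then $v\in\varphi^{-1}(e)$ has an incident edge in $X_e$, so $v\in V(F'')$.

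\textbf{Main obstacle.} The delicate point is the termination condition in Step 2. One must check that ``no path of length $>1$ is left'' really forces a matching, since a star or a longer path would still contain a length-$2$ path. One must also check that the splitting never produces loops or cycles; a loop would spoil the degree count, since it adds $2$ to a vertex's degree. Acyclicity from Step 1 rules out both, because the endpoints of a path in a forest are distinct and not otherwise connected.
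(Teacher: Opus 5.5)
Your proof is correct and follows the paper's argument. Degree parities are preserved by deleting the circuits and by splitting off, and the termination condition forces every degree in $F_e''$ to be at most $1$, so odd vertices end with degree exactly $1$ and even vertices with degree $0$.

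Two of your points make explicit what the paper leaves implicit. The forest invariant guarantees that no parallel edges or loops arise, so a vertex of degree $\ge 2$ really lies on a path of length $2$. The observation that $G$ has no isolated vertices guarantees that each odd vertex lies in some $\varphi^{-1}(e)$. The only slip is writing $X_e\subseteq E(F_e)$: split-off edges are new edges, not edges of $F$. What you actually need still holds, namely that every edge of $X_e$ has both endpoints in $\varphi^{-1}(e)$.
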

    \begin{proofofclaim}
        First, note that the parity of the degree of a vertex in $F_e'$ is equal to the parity of its degree in $F_e$, since each vertex in a circuit has even degree.
        Similarly, the parity of the degree of a vertex in $F_e''$ is equal to the parity of its degree in $F_e'$, since when a pair of edges is split off from a vertex the degree of the vertex decreases by 2 and the degrees of all other vertices remain constant.
        Thus, an odd vertex in $F$ has odd degree in $F_e''$ and is therefore not isolated in $(V(F), \bigcup_{e \in E(G)} X_e)$ so it belongs to $V(F'')$.
        However, an even vertex in $F$ has even degree in $F_e''$. Additionally, if its degree is greater than 1, it lies on a path of length $>1$ in $F_e''$, a contradiction. Thus its degree in $F_e''$ must be 0 for each $e$, so it is isolated in $(V(F), \bigcup_{e \in E(G)} X_e)$ and thus not in $V(F'')$.
        \par 
        By the same argument we see that the degree of an odd vertex in $F_e''$ is not greater than 1, so it must be exactly 1. Therefore, each vertex of $F''$ is incident to exactly one edge in $X_e$ if $v \in \varphi^{-1}(e)$.
    \end{proofofclaim}
    For $x \in V(G)$, let $H_x$ denote the induced subgraph of $F''$ on the vertices of $\varphi^{-1}(x) \cap V(F'')$.
    \begin{claim}
    \label{claim:odd-even-components}
        A connected component of odd/even size in $H_x$ has an odd/even number of edges connecting it to $H_y$ for any $\{x,y\} \in E(G)$.
    \end{claim}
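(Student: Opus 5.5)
We prove Claim \ref{claim:odd-even-components} by a parity count. Fix an edge $\{x,y\} \in E(G)$ and a connected component $K$ of $H_x$; write $e = \{x,y\}$.

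\textbf{Step 1: the edges of $F''$ between $H_x$ and $H_y$ are exactly the edges of $X_e$.} An edge of $F''$ lies in some $X_{e'}$. Every edge of $F_{e'}$ maps under $\varphi$ to $e'$, so it joins $\varphi^{-1}(x')$ to $\varphi^{-1}(y')$ where $e' = \{x',y'\}$. Splitting off a path inside $F_{e'}''$ preserves this property: $F_{e'}$ is bipartite with sides $\varphi^{-1}(x')$ and $\varphi^{-1}(y')$, and a maximal path of length $>1$ that is split off has endpoints on opposite sides exactly when its length is odd. We must therefore check that the remaining edges in $X_{e'}$ still join $\varphi^{-1}(x')$ to $\varphi^{-1}(y')$. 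This is the delicate point, and I expect it to be the main obstacle.

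The approach I would take is to note that all vertices of $F_{e'}''$ with positive degree have degree exactly 1 (by the previous claim). The split-off procedure on $F_{e'}'$ therefore decomposes its edge set into edge-disjoint trails, and each surviving edge of $X_{e'}$ corresponds to a trail between two odd-degree vertices. Odd vertices in the same fibre $\varphi^{-1}(x')$ would be joined by an even-length trail. To rule this out I would refine the procedure: since the construction has freedom in which maximal paths are split, arrange it (or re-pair the trail endpoints) so that endpoints lie in opposite fibres. This is possible because in a bipartite graph the odd-degree vertices on the two sides can be paired across sides precisely when the parity counts allow it; the degree-sum on each side equals $|E(F_{e'}')|$, so the numbers of odd-degree vertices on the two sides have equal parity.

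If that refinement turns out to be too delicate, the fallback is to argue the claim only modulo 2. In that case Step 1 is not needed in full, because an edge of $X_e$ internal to $\varphi^{-1}(x)$ contributes $2$ to the degree sum below and so does not affect parity. This fallback seems the cleanest route, and it is what the count in Step 2 actually uses.

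\textbf{Step 2: count degrees.} By the previous claim each vertex $v$ of $K$ lies in $\varphi^{-1}(e)$, since $x$ is a vertex of $e$. Hence $v$ is incident to exactly one edge of $X_e$. Summing over $v \in V(K)$, the number of $X_e$-edge endpoints in $K$ is $|V(K)|$. Endpoints of $X_e$-edges inside $\varphi^{-1}(x)$ come in pairs, so modulo 2 only the edges from $K$ to $\varphi^{-1}(y) \cap V(F'') = V(H_y)$ remain. An $X_e$-edge with both ends in $K$ is counted twice, and one from $K$ to another component of $H_x$ cannot exist, since such an edge would lie inside $H_x$ and connect the two components. Therefore the number of edges joining $K$ to $H_y$ is congruent to $|V(K)|$ modulo 2.

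Finally, edges of $F''$ joining $K$ to $H_y$ lie in some $X_{e'}$ with $\varphi$-image set meeting both $x$ and $y$. The only such $e'$ is $e$, because the endpoints of edges in $F_{e'}$ lie in $\varphi^{-1}(e')$. This gives the claim.
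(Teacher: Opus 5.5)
Your fallback route (Step~2 together with your final paragraph) is correct and is essentially the paper's proof. Each vertex of $K$ meets exactly one edge of $X_{\{x,y\}}$, and edges of $X_{\{x,y\}}$ inside $K$ are counted twice. None can go to another component of $H_x$, so the number of $X_{\{x,y\}}$-edges from $K$ to $H_y$ has the parity of $|V(K)|$. Finally, every edge of $F''$ between $H_x$ and $H_y$ lies in $X_{\{x,y\}}$.

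You should, however, discard Step~1 rather than leave it as a delicate point. Its headline, that the $H_x$--$H_y$ edges are \emph{exactly} $X_e$, is false, as your own remark about even-length paths already shows. The proposed refinement also cannot be carried out: equal parity of the numbers of odd-degree vertices on the two sides does not give a pairing across sides. For instance, in Figure~\ref{fig:oddo-split-minor-example} the three odd vertices of the top-right fibre each have a single edge to the bottom-right fibre, which contains only one odd vertex. So at least one edge of that $X_e$ must lie inside the top-right fibre, as the right-hand picture shows. Such intra-fibre edges are precisely what give $H_x$ nontrivial components. Only the inclusion you prove in your last paragraph holds, and it is all the argument needs.
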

    \begin{proofofclaim}
        Let $C\subseteq H_x$ be such a component and let $\{x,y \} \in E(G)$. Note that the edges between $H_x$ and $H_y$ are all contained in $X_{\{x,y\}}$ and every edge in $X_{\{x,y\}}$ either connects $H_x$ to $H_y$ or is entirely contained in $H_x$ or $H_y$. 
        Since each vertex in $C$ is incident with exactly one edge in $X_{\{x,y\}}$ and each such edge that is entirely contained in $C$ is incident with two vertices in $C$ (there are no loops) it follows that the parity of the number of edges in $X_{\{x,y\}}$ incident with exactly one vertex of $C$ is the same as the parity of the number of vertices in $C$. This completes the proof of the claim.
    \end{proofofclaim}

    We now let $F_0 \coloneqq F$ and form the graph $F_1$ by contracting all edges in $F''$ that lie within a single fibre of $\varphi$.
    It follows from the preceding claims that $\varphi$ gives rise to a homomorphism $\varphi_1: F_1 \to G$ which is an oddomorphism.
    Indeed, the odd vertices are the components of odd size and the even vertices are the components of even size. Since all vertices of $F''$ are odd, the total number of vertices in a fibre of $F''$ is odd. Thus there is an odd number of odd vertices in each fibre of $\varphi_1$. The degree condition follows directly from Claim \ref{claim:odd-even-components}. 
    \par 

    \begin{figure}[htbp]
        \centering
        \begin{minipage}{0.48\textwidth}
    \centering

\begin{tikzpicture}[
    vertex/.style={circle, draw=black, thick, minimum size=14pt, inner sep=0pt, font=\small},
    odd/.style={vertex, fill=blue!10},
    even/.style={vertex, fill=white},
    edge/.style={thick, black},
    bag/.style={ellipse, draw=gray, dashed, inner sep=6pt},
    scale=0.8
]

    \node[odd]  (Ao)  at (-1, 1)   {o};
    \node[even] (Ae1) at (-2, 2) {e};
    \node[bag, rotate fit=-45, fit=(Ao) (Ae1)] (BagA) {};

    \node[odd]  (Bo)  at (1, 1)   {o};
    \node[odd] (Be1) at (2, 2) {o};
    \node[odd] (Be2) at (3, 3) {o};
    \node[bag, rotate fit=45, fit=(Bo) (Be1) (Be2)] (BagB) {};

    \node[odd]  (Co)  at (1, -1)   {o};
    \node[even] (Ce1) at (2, -2) {e};
    \node[bag, rotate fit=-45, fit=(Co) (Ce1)] (BagC) {};

    \node[odd]  (Do)  at (-1, -1)   {o};
    \node[odd] (De1) at (-2, -2) {o};
    \node[odd] (De2) at (-3, -3) {o};
    \node[bag, rotate fit=45, fit=(Do) (De1) (De2)] (BagD) {};

    \draw[edge] (Ao) -- (Bo);
    \draw[edge] (Ae1) -- (Be1);
    \draw[edge] (Ae1) -- (Be2);
    
    \draw[edge] (Bo) -- (Ce1);
    \draw[edge] (Be1) -- (Ce1);
    \draw[edge] (Be2) -- (Co);

    \draw[edge] (Co) -- (De2);
    \draw[edge] (Ce1) -- (De1);
    \draw[edge] (Ce1) -- (Do);

    \draw[edge] (Ao) -- (Do);
    \draw[edge] (Ae1) -- (De1);
    \draw[edge] (Ae1) -- (De2);

\end{tikzpicture}
\end{minipage}\hfill
\begin{minipage}{0.48\textwidth}
    \centering

\begin{tikzpicture}[
    vertex/.style={circle, draw=black, thick, minimum size=14pt, inner sep=0pt, font=\small},
    odd/.style={vertex, fill=blue!10},
    even/.style={vertex, fill=white},
    edge/.style={thick, black},
    bag/.style={ellipse, draw=gray, dashed, inner sep=6pt},
    scale=0.8
]

    \node[even] (Ao)  at (-1, 1) {};
    \node[bag, rotate fit=-45, fit=(Ao)] (BagA) {};

    \node[even]  (Bo)  at (1, 1) {};
    \node[even] (Be1) at (2, 2) {};
    \node[even] (Be2) at (3, 3) {};
    \node[bag, rotate fit=45, fit=(Bo) (Be1) (Be2)] (BagB) {};

    \node[even]  (Co)  at (1, -1)   {};
    \node[bag, rotate fit=-45, fit=(Co)] (BagC) {};

    \node[even]  (Do)  at (-1, -1)   {};
    \node[even] (De1) at (-2, -2) {};
    \node[even] (De2) at (-3, -3) {};
    \node[bag, rotate fit=45, fit=(Do) (De1) (De2)] (BagD) {};

    \draw[edge] (Ao) -- (Bo);
    \draw[edge] (Be2) -- (Be1);
    
    \draw[edge] (Bo) -- (Be1);
    \draw[edge] (Be2) -- (Co);

    \draw[edge] (Co) -- (De2);
    \draw[edge] (Do) -- (De1);

    \draw[edge] (Ao) -- (Do);
    \draw[edge] (De2) -- (De1);

\end{tikzpicture}
\end{minipage}
        \caption{A graph $F$ (left) with an oddomorphism to $C_4$. On the right is the corresponding $F''$. In this case $F_1 = C_4$ and the process would halt after just one iteration.}
        \label{fig:oddo-split-minor-example}
    \end{figure}
    
    We now repeat this process until we get $F_{n+1} = F_n$. Then, for each $e \in E(G)$, there is no circuit within $\varphi_n^{-1}(e)$ and every path within this graph is of length 1. This implies that $\varphi_n^{-1}(e)$ is a matching between the corresponding fibres.
    To see that this matching is perfect, we note that no vertex can have degree 0 within $\varphi_n^{-1}(e)$. A vertex with degree 0 within $\varphi_n^{-1}(e)$ must be even, which implies it has degree 0 in the whole graph since no vertex has degree 2 within any $\varphi_n^{-1}(e')$. This means that the vertex is isolated in $F_n$ and then would have been deleted when forming $F_{n+1}$, a contradiction.
    Thus $F_n$ is a lift of $G$ and it follows from Lemma \ref{lemma:perfect-matching-lm} that $G$ is a split-off minor of $F_n$. Since $F_n$ is a split-off minor of $F$, this completes the proof.
\end{proof}

The following corollary follows from Theorem \ref{thm:oddo-implies-lm} and the fact that a subgraph of a graph $F$ is a split-off minor of $F$.
\begin{corollary}
    \label{cor:wo-implies-som}
    If a graph $F$ has a weak oddomorphism to a graph $G$ then $G$ is a split-off minor of $F$.
    Therefore, a class of graphs that is closed under taking split-off minors is closed under weak oddomorphisms.
\end{corollary}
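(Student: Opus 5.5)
The corollary has two parts, and I will prove them in order.

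\textbf{First part.} Suppose $\varphi: F \to G$ is a weak oddomorphism. By Definition \ref{def:oddomorphisms}, there is a subgraph $F'$ of $F$ such that the restriction $\restr{\varphi}{F'}: F' \to G$ is an oddomorphism. Theorem \ref{thm:oddo-implies-lm} applied to $F'$ then shows that $G$ is a split-off minor of $F'$. Since $F'$ is a subgraph of $F$, it is obtained from $F$ by deleting edges and vertices, and deletion is one of the permitted operations in Definition \ref{def:split-off-minors}. Concatenating the two sequences of operations (first pass from $F$ to $F'$, then from $F'$ to $G$) gives $G$ as a split-off minor of $F$.

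\textbf{Second part.} Let $\C$ be a class of graphs closed under taking split-off minors. "Closed under weak oddomorphisms" means that whenever $F \in \C$ and $F$ has a weak oddomorphism to a graph $G$, then $G \in \C$. Given such $F$ and $G$, the first part shows that $G$ is a split-off minor of $F$. Closure of $\C$ under split-off minors then gives $G \in \C$.

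\textbf{The only point needing care} is that the split-off minor relation is transitive and contains the subgraph relation. Both follow directly from the definition as "obtainable by a finite sequence of operations." One must also note that $F'$ may be a multigraph only in the sense allowed by Definition \ref{def:oddomorphisms}. Theorem \ref{thm:oddo-implies-lm} is stated for graphs, and its proof runs through multigraph operations anyway; here $F'$ is a subgraph of the simple graph $F$, hence simple, so Theorem \ref{thm:oddo-implies-lm} applies verbatim. I do not expect any real obstacle.
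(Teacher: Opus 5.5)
Your proposal is correct and follows essentially the same route as the paper, which derives the corollary directly from Theorem~\ref{thm:oddo-implies-lm} together with the fact that a subgraph of $F$ is a split-off minor of $F$ (and hence, by transitivity, so is anything obtained from that subgraph). Your observation that $F'$ is simple because it is a subgraph of the simple graph $F$, so that the theorem applies verbatim, is a fine extra check.
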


\section{Split-off minors}
The goal of this section is to establish some facts about the split-off minor relation. 
\par 
We begin by observing that it forms a \emph{well-quasi-ordering}.
One of the most celebrated results in structural graph theory is the result of Robertson and Seymour that the graph minor relation is a well-quasi-ordering on the set of all finite graphs \cite{ROBERTSON2004325}. A well-quasi-ordering is a quasi-order (also known as preorder) that has neither infinite descending chains nor infinite antichains.
In turn, this means that for every class $\C$ of graphs that is closed under taking minors there exists a finite set of graphs $S$ such that $\C$ is the class of graphs that exclude the graphs in $S$ as minors.
The split-off minor relation inherits this property from the minor relation.
\begin{proposition}
\label{prop:split-off-is-wqo}
    The split-off minor relation is a well-quasi-ordering on the set of finite graphs.
\end{proposition}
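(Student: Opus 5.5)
The plan is to deduce the well-quasi-ordering from the Robertson–Seymour theorem \cite{ROBERTSON2004325}. The key observation is that the split-off minor relation contains the minor relation: if $H$ is a minor of $G$, then $H$ is obtained from $G$ by taking subgraphs and contracting edges. Both operations are allowed in Definition \ref{def:split-off-minors}. Contraction there is a multigraph operation, which may create parallel edges or loops, but we can simply delete these by taking a subgraph, so the simple-graph contraction is recovered. Hence $H \preccurlyeq_{\text{minor}} G$ implies that $H$ is a split-off minor of $G$.

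First I verify that the split-off minor relation is a quasi-order. Reflexivity is trivial (apply no operations). Transitivity holds because concatenating two finite sequences of allowed operations gives a finite sequence of allowed operations. One minor subtlety is that for simple graphs the relation is defined through multigraph intermediate steps, but the composed sequence is still a legitimate sequence of multigraph operations, so this causes no trouble.

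Next I derive the well-quasi-ordering property. An infinite antichain or an infinite strictly descending chain would be a bad sequence $G_1, G_2, \ldots$, meaning no $i<j$ has $G_i$ a split-off minor of $G_j$. I use the standard equivalent formulation of a wqo as the absence of bad sequences. This is immediate for antichains. For an infinite descending chain $G_1 \succ G_2 \succ \cdots$ with strict relations, a pair $i<j$ with $G_i \preccurlyeq G_j$ would combine with $G_j \preccurlyeq G_{j-1} \preccurlyeq \cdots \preccurlyeq G_i$ to contradict strictness.

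Finally, suppose such a bad sequence existed. Since the minor relation is contained in the split-off minor relation, no $i<j$ would have $G_i$ a minor of $G_j$ either, so the sequence would be bad for the minor relation. This contradicts Robertson–Seymour. Equivalently, any superset of a wqo on the same ground set that is itself a quasi-order is again a wqo.

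I do not expect a real obstacle. The only point needing care is the containment of minors in split-off minors, given that the operations are taken in the multigraph sense. That care amounts to noting that the extra loops and parallel edges produced by contraction are removable by taking subgraphs.
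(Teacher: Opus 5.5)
Your proof is correct and follows essentially the same route as the paper: both deduce the well-quasi-ordering from Robertson--Seymour via the observation that every minor is a split-off minor. The only difference is how infinite descending chains are excluded. The paper does this separately, by noting that each operation strictly decreases the number of vertices or edges, whereas you treat them as bad sequences so that the containment argument covers both conditions at once (and you handle the loops and parallel edges from multigraph contraction explicitly, which the paper leaves implicit).
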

\begin{proof}
    Since the number of edges or vertices is always reduced when any of the operations are applied, it follows that there are no infinite descending chains. Since a minor is a split-off minor and there are no infinite antichains in the minor relation, it follows that there are no infinite antichains in the split-off minor relation. 
\end{proof}
Note that Robertson and Seymour also proved that the immersion relation is a well-quasi-ordering \cite{ROBERTSON2010181}. We could therefore also have said that the split-off minor relation inherits the well-quasi-ordering property from the immersion relation.
\par
The graph minor and immersion relations have both \emph{constructive} and \emph{structural} definitions. The former is the definition that states that $G$ is a minor/immersion of $F$ if it can be constructed from $F$ by applying a sequence of operations. The latter is the static definition, which is based on whether the structure of the minor/immersion can be exhibited within the larger graph in some way.
\par 
Given this distinction, the definition given here for split-off minors is constructive.
This evokes the question of whether there is also a structural definition for split-off minors. 
It proved difficult to find such a characterisation that is simple, so we settle for one that is quite involved but nevertheless insightful.
Here, we use $\uplus$ to denote the additive union of two multisets and $\setminus$ to denote their subtraction.
\begin{theorem}
    The following are equivalent:
    \begin{enumerate}[(i)]
        \item $G$ is a split-off minor of $F$.
        \item There exists a forest $T$ on $V(F)$, a linear ordering $\prec$ on the edges of $T$, a set of paths $\{P_t \mid t \in E(T)\}$, an injective function $\pi$ from $V(G)$ to the components of $T = (V(F), E(T))$, and a set of paths $\{P_e \mid e \in E(G)\}$ such that \begin{itemize}
            \item For $t \in E(T)$, $P_t$ is a path between the endpoints of $t$ consisting of edges from $(E(F) \setminus \biguplus_{t'\prec t} P_{t'}) \uplus \{t' \mid t'\prec t\}$.
            
            \item $P_{\{u,v\}}$ for $\{u,v\} \in E(G)$ is a path between a vertex in $\pi(u)$ and a vertex in $\pi(v)$, consisting of edges from $(E(F) \setminus \biguplus_{t\in E(T)} P_{t}) \uplus E(T)$.
            \item For any distinct $e,e' \in E(G)$, we have $E(P_e) \cap E(P_{e'}) \subseteq E(T)$.
        \end{itemize}
    \end{enumerate}
\end{theorem}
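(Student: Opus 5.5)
The plan is to prove both directions by normalising the order of operations. First I will show that any sequence of subgraph, contraction and split-off operations can be rearranged into a canonical form: first a phase of split-offs that "prepare" the contracted edges, then all contractions, then split-offs, then deletions.

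Two observations drive this. Deletion commutes past the other operations: deleting an edge or vertex can be postponed to the end, because a deleted element is never used later and contracting or splitting off edges does not require the absence of other edges. Loops produced along the way are harmless, since they can be deleted at the end. A split-off that uses edges created by contraction can be traced back to edges of $F$.

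The key bookkeeping device is the forest $T$. Each contraction merges vertex sets, so after all contractions the vertices of the current multigraph correspond to vertex sets of $F$. Each contracted edge $t$ may itself have been created by splitting off a path, and that path may use earlier-created edges. This is recorded by $P_t$ together with the ordering $\prec$: $P_t$ may use original edges not consumed by earlier $P_{t'}$, as well as the earlier contracted edges $t'$ themselves. After contracting $t'$, its endpoints coincide, so traversing $t'$ costs nothing. The collection of contracted edges forms a forest on $V(F)$; if one closes a cycle, that contraction creates a loop, which I simply discard. Its components are the branch sets, and $\pi$ sends each vertex of $G$ to the component it arises from. An edge of $G$ arises from edge-disjoint paths of remaining edges in the contracted graph, by the immersion characterisation. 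Lifted back to $F$, such a path runs between vertices of $\pi(u)$ and $\pi(v)$ and uses edges of $T$ freely, since those are internal to contracted sets. This explains the condition $E(P_e)\cap E(P_{e'})\subseteq E(T)$.

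\textbf{(i)$\Rightarrow$(ii).} I will proceed by induction on the length of the operation sequence, maintaining the data of (ii) for the current intermediate multigraph $F_k$ relative to $F$, with an appropriate generalisation for intermediate multigraphs. Concretely, each edge of $F_k$ is assigned a walk in $F$, edge-disjoint modulo $T$, and each vertex a $T$-component, with $T$ growing over time.
\begin{itemize}
\item A split-off concatenates two assigned walks.
\item A contraction of an edge $f$ of $F_k$ adds a new tree edge $t$, joining representatives of the two components, placed last in $\prec$, with $P_t$ the walk of $f$.
\item A deletion forgets data.
\end{itemize}
Walks are shortened to paths at the end. The main obstacle is here. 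When contracting $f$, its walk runs between some vertex of one component and some vertex of the other, not necessarily between chosen endpoints. So I take $t$ to be the pair of actual endpoints of the walk, and use earlier tree edges to stitch it. A second difficulty is that later edges may reuse tree edges many times, which is why multiset union $\uplus$ is needed. I must check that the consumption accounting, $E(F)\setminus\biguplus P_{t'}$, is exactly respected.

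\textbf{(ii)$\Rightarrow$(i).} I will process the edges of $T$ in $\prec$-order. For each $t$, I split off $P_t$ in the current graph, where the earlier tree edges are already contracted and hence appear as loops or are skipped. This produces an edge between the (merged) endpoints of $t$, which I then contract. This is valid because $P_t$ uses only unconsumed original edges and earlier tree edges, and those earlier tree edges are now internal to contracted vertices. After all of $T$ is contracted, the vertices are the components of $T$. For each $e=\{u,v\}$, $P_e$ becomes a walk between $\pi(u)$ and $\pi(v)$ once the $T$-edges are dropped, and distinct walks are edge-disjoint. Splitting each one off yields the edge $e$. Deleting all remaining edges, and all components outside the image of $\pi$, gives $G$.
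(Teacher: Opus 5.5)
Your argument is essentially the paper's proof. For (i)$\Rightarrow$(ii) you induct along the operation sequence: split-offs concatenate paths, joined inside the $T$-component of the middle vertex, and each contraction adds a new $\prec$-last tree edge between the actual endpoints of the contracted edge's path, with that path as $P_t$. For (ii)$\Rightarrow$(i) you split off $P_t$ and contract $t$ in $\prec$-order, then read off an immersion of $G$ in the resulting graph.

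You should, however, drop the opening claim that every sequence can be normalised to ``split-offs, then all contractions, then split-offs, then deletions''. It is false in general. Take the path with vertices $c,a,b,d$ in this order, with $S,2S,4S,8S$ pendant leaves at $a,b,c,d$ and $S$ large. Then $K_{1,3S}\sqcup K_{1,12S}$ is a split-off minor: contract $ab$, split off the two remaining path edges, and contract the new edge $cd$. Degree counting forces $\{a,b\}$ and $\{c,d\}$ into distinct branch sets. Both mergers need the bridge $ab$, which can be consumed only once, so some split-off must pass through a vertex produced by an earlier contraction and create an edge that is contracted later. This is exactly why $\prec$ appears in (ii) and why your own (ii)$\Rightarrow$(i) interleaves the two operations. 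Neither direction of your argument actually uses the normalisation.
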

\begin{proof}\ 
    \begin{itemize}
        \item[(ii)$\Rightarrow$(i)]
            We iterate through the edges $t \in E(T)$ in the order of $\prec$ and first split off the path $P_t$ and then contract the resulting edge $t$. Note that for $t'\prec t$, the edge $t'$ has already been contracted, so, to be precise, we do not split off $P_t$ but the path obtained from $P_t$ by contracting all of the edges $\{t' \mid t'\prec t\}$. Let $F'$ be the resulting graph.
            \par 
            The vertices of $F'$ correspond to the connected components of $T$. Its edges correspond to the edges in $E(F) \setminus \biguplus_{t\in E(T)} P_t$ between the corresponding components of $T$.
            We therefore see that $\pi$ and the paths obtained by contracting all edges in $T$ from $\{P_e \mid e \in E(G)\}$ define an immersion of $G$ into $F'$.
            This shows that $G$ is a split-off minor of $F$.

        \item[(i)$\Rightarrow$(ii)] 
        We define the required objects inductively. 
        First note that if $G = F$, then we can define $T = (V(F), \varnothing)$, $\pi = \id_{V(F)}$, and for $e \in E(G) = E(F)$ we let $P_e$ be the path consisting only of $e$. It is easy to see that all of the conditions are satisfied.\par 
        For the inductive case, let $G'$ be a graph that is a split-off minor of $F$ such that $G$ can be obtained from $G'$ by applying exactly one of the allowed operations. 
        By the induction hypothesis we assume that there exist $T_{G'}$, $\prec_{G'}$, $\{P_t \mid t \in E(T_{G'})\}$, $\pi_{G'}$, and $\{P_e \mid e \in E(G')\}$ satisfying the conditions. 
        We show how the corresponding objects $T_{G}$, $\prec_G$, $\{P_t \mid t \in E(T_{G})\}$, and $\pi_{G}$ are constructed in the four different cases.
        \begin{itemize}
            \item $G$ is obtained from $G'$ by deleting an edge. 
            The only change needed is to delete the path $P_e$ corresponding to the deleted edge $e \in E(G')$.
            \item $G$ is obtained from $G'$ by deleting an isolated vertex. In this case we only have to remove the isolated vertex from the domain of $\pi_{G'}$.

            \item $G$ is obtained from $G'$ by splitting off the edges $\{v,u\}, \{u,w\} \in E(G')$. In this case we remove the paths $P_{\{v,u\}}$ and $P_{\{u,w\}}$ and add $P_{\{v,w\}}$, which is the composition of $P_{\{v,u\}}$, a path within the component of $\pi(u)$ in $T_G$, and $P_{\{u,w\}}$.

            \item $G$ is obtained from $G'$ by contracting an edge $e' \in E(G')$. The set $\{P_e \mid e \in E(G)\}$ is inherited from $G'$ by omitting $P_{e'}$. Let $u,v \in V(F)$ be the endpoints of the path $P_{e'}$. We add a new edge $t$ between $u$ and $v$ to $T_{G'}$ to form $T_G$ and put $t$ at the end of the order $\prec_{G'}$ to form the order $\prec_G$. We then define $P_t$ as the path $P_{e'}$.
            The function $\pi_G$ is equal to $\pi_{G'}$ at all vertices except at the newly merged one. It maps that vertex to the component of $T_{G}$ obtained by joining the components of $T_{G'}$ that contain $u$ and $v$.
        \end{itemize}
         It is easy to check that the constructions $T_{G}$, $\prec_G$, $\{P_t \mid t \in E(T_{G})\}$, and $\pi_{G}$ satisfy the conditions in all of the cases.
    \end{itemize}
\end{proof}
The reason for the necessity of the linear order $\prec$ in the above characterisation is that the order in which the splitting-off and the contraction operations are applied matters. When a splitting-off operation is applied to create an edge that is then contracted, the contraction cannot be applied before the splitting off. Likewise, if a path that is to be split off takes advantage of the fact that two vertices have been merged, the splitting off cannot happen before the contraction (without removing the edge that was contracted to merge the two vertices). It therefore seems necessary to retain some dependency order in the structural definition of split-off minors.

As mentioned previously, the split-off minor relation is very similar to the lift-minor relation studied by Golovach et al. \cite{GOLOVACH2014286}, the only difference being that the multiplicity of edges is always immediately reduced to 1 when forming a lift-minor. It is easy to see that if a graph $G$ is a lift-minor of a graph $F$, then $G$ is a split-off minor of $F$. However, it is not immediately clear whether the two relations are equivalent.
When taking graph minors, whether multiple edges are deleted immediately does not matter, provided that both the initial and final graphs are simple.
An analogous statement holds for immersions, but its proof is not quite immediate.
Since the author was unable to locate a reference for this fact in the literature, a proof is included here.
\begin{proposition}
    Let $G,F$ be simple graphs such that $G$ immerses into $F$. Then $G$ can be obtained from $F$ using the operations of splitting off edges and taking a subgraph in such a way that at every intermediate step the graph is simple.
\end{proposition}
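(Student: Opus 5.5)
We work with the structural characterisation of immersions from the earlier proposition. Let $\pi:V(G)\to V(F)$ be injective and let $\{P_e \mid e\in E(G)\}$ be edge-disjoint paths in $F$, where $P_e$ joins $\pi(v)$ and $\pi(w)$ for $e=\{v,w\}$. Choose this witness so that the total length $\sum_e |E(P_e)|$ is minimal.

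First delete every edge of $F$ that lies on no $P_e$, and every vertex that is neither a branch vertex nor an interior vertex of some path. This is a subgraph operation, so the result is simple. What remains is to split off the paths one at a time, taking care that no parallel edge ever appears.

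Splitting off a path $v_1,\ldots,v_n$ is itself a sequence of lifts. We lift $v_1v_2,v_2v_3$ to create $v_1v_3$, then lift $v_1v_3,v_3v_4$, and so on. The intermediate edges are all of the form $\{v_1,v_j\}$.

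Call an edge $\{x,y\}$ \emph{bad at a given moment} if it is already present in the current graph when a lift tries to create it. We handle bad edges by rerouting rather than creating a duplicate. Suppose the edge $\{x,y\}$ to be created already exists and belongs to the current (partially processed) path $P_f$, with $f\neq e$. Then we exchange the roles of the two routes. The segment of $P_e$ from $x$ to $y$ is handed over to $P_f$ in place of the edge $\{x,y\}$, and $P_e$ uses the existing edge $\{x,y\}$ directly.

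\begin{itemize}
\item Both new families remain edge-disjoint, since we only swapped which path owns which edges.
\item The segment given to $P_f$ may need to be shortcut to keep $P_f$ a path. Shortcutting only shortens it.
\item If the existing edge instead belongs to $P_e$ itself, then $P_e$ was not a path, or we can shortcut $P_e$.
\end{itemize}

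Each of these moves strictly decreases either the total length or the number of edges still awaiting lifting. Rather than argue about what happens mid-process, the cleanest formulation is an induction on $\sum_e |E(P_e)| - |E(G)|$, which is the number of lifts still needed:

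\begin{itemize}
\item If some path $P_e$ has length at least $2$ and contains consecutive edges $xu,uy$ with $\{x,y\}\notin E(F)$, we lift them. The result is simple, the paths remain an immersion witness, and we apply induction.
\item Otherwise, every pair of consecutive edges $xu,uy$ on any path has $\{x,y\}$ already present in $F$. Pick such a pair; the edge $xy$ is owned by some $P_f$ (after pruning, every edge is owned). If $f=e$, then $P_e$ can be shortcut, contradicting minimality. If $f\neq e$, we swap: $P_e$ takes the edge $xy$ in place of $xu,uy$, and $P_f$ takes $xu,uy$ in place of $xy$. This keeps the total length the same, so minimality alone does not close the argument.
\end{itemize}

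The main obstacle is this last swap case, and one needs a finer potential to handle it. I would use a lexicographic potential: the sorted vector of path lengths, or equivalently $\sum_e |E(P_e)|^2$, which the swap changes when the lengths of $P_e$ and $P_f$ differ. We pick the pair $(e,f)$ so that $|E(P_e)| > |E(P_f)|+1$ if possible. If every path of length at least $2$ is blocked only by paths of nearly equal length, then among all blocking configurations I would choose the lift on a longest path. When $P_f$ becomes $x,u,y$ of length ... lifting $xu,uy$ inside $P_f$ recreates $xy$, which is just as blocked. For this subcase I would instead argue that some consecutive pair in the whole system is unblocked, by considering a path of maximum length together with a triangle-counting or extremal-choice argument. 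Getting this termination argument right is the crux; the remaining steps are routine.
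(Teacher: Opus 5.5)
Your proof is incomplete at the point you flag yourself. The pruning step and the lift of an unblocked consecutive pair are fine. The trouble is the case where every consecutive pair $xu,uy$ on every path has $xy\in E(F)$: there you have no move that is guaranteed to make progress.

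\begin{itemize}
\item The swap changes neither $F$ nor $\sum_e|E(P_e)|$.
\item $\sum_e|E(P_e)|^2$ only drops when $|E(P_e)|\ge|E(P_f)|+2$.
\item For the remaining subcase you gesture at a ``triangle-counting or extremal-choice argument'' that is never given.
\item Your earlier claim that every move decreases the total length or the number of pending lifts is contradicted by your own analysis of the swap.
\end{itemize}

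Without a termination argument there is no proof. The root problem is that you only ever lift two \emph{consecutive} edges of one path and keep a fixed witness. Rerouting then has to be a separate step with its own potential.

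The missing idea, which is how the paper argues, is to induct on $|E(F)|$ and re-choose the immersion witness after every step. Look at triples $v,u,w$ lying on some $P_e$ in this order, but not necessarily consecutively, with $vu,uw\in E(F)$ and $vw\notin E(F)$. The edges $vu,uw$ may belong to other paths $P_{e'},P_{e''}$. Splitting off $vu,uw$ keeps the graph simple and removes one edge, and $G$ still immerses:

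\begin{itemize}
\item $P_e$ uses the new edge $vw$ in place of its segment from $v$ to $w$.
\item $P_{e'}$ and $P_{e''}$, when different from $e$, are rerouted along the freed segments of $P_e$ from $v$ to $u$ and from $u$ to $w$, shortcutting walks to paths if needed.
\end{itemize}

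This is your swap and your lift fused into one operation that always shrinks $F$, so no potential for swaps is needed. If no such triple exists on any path, write $P_e=(v_1,\dots,v_n)$. Induction on $j-i$, using the triple $v_i,v_{i+1},v_j$, shows that every $v_iv_j$ is an edge of $F$. In particular the endpoints of every $P_e$ are adjacent, so $\pi$ embeds $G$ as a subgraph of $F$, and deleting edges finishes. This closure argument is what replaces your unresolved subcase.
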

\begin{proof}
    Let $\imm$ denote the immersion relation and let $\immsimp$ denote the relation corresponding to the existence of an immersion where all intermediate graphs are simple.
    Let $(\pi, \{P_e \mid e \in E(G)\})$ be the immersion of $G$ into $F$. 
    We show that we can find a simple graph $F'$ with fewer edges than $F$ such that $G \imm F' \immsimp F$. The result then follows from an induction on the number of edges in $F$.
    To find $F'$ we examine a few cases:
    \begin{itemize}
        \item $F$ contains an edge that does not appear in any $P_e$. We can then let $F'$ be the subgraph of $F$ obtained by deleting that edge. It is clear that $G \imm F' \immsimp F$ and $F'$ has fewer edges than $F$.

        \item $F$ only has edges that appear in some $P_e$ and additionally there exist vertices $v,u,w$ that lie on a path $P_e$ in this order such that $\{v,u\}, \{u,w\} \in E(F)$ but $\{v,w\} \notin E(F)$.
        We then split off the edges $\{v,u\}$ and $\{u,w\}$ to form the graph $F'\immsimp F$. We now show that $G$ immerses into $F'$. To see this, let $P_{e'}$ be the path containing $\{v,u\}$ and let $P_{e''}$ be the path containing $\{u,w\}$. To define the immersion, we leave $\pi$ unchanged along with all $P_f$ for $f \notin \{e,e', e''\}$. The path $P_e$ now takes the new shortcut $\{v,w\}$, the path $P_{e'}$ (if $e' \neq e$) is routed through the old route of $P_e$ between $v$ and $u$, and the path $P_{e''}$ (if $e'' \neq e$) is routed through the old route of $P_e$ between $u$ and $w$.

        \item Neither of the above cases holds. We then have that for each path $P_e$, $F$ contains the complete graph on the vertices of $P_e$. But then $G$ is a subgraph of $F$. Namely, it follows that for each $\{v,u\} \in E(G)$ we have $\{\pi(v), \pi(u)\} \in E(F)$, so $G \imm G \immsimp F$ and we are done.
    \end{itemize}
\end{proof}

In light of these two facts, it is slightly surprising that when the operations of edge-contraction and splitting off are allowed in conjunction, we can no longer require the intermediate steps to be simple and thus the split-off minor relation is not equal to the lift-minor relation.

\begin{proposition}
    There exist simple graphs $F$ and $G$ such that $G$ is a split-off minor of $F$ but $G$ is not a lift-minor of $F$.
\end{proposition}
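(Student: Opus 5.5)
The plan is to exploit the one place where the two relations differ. In the multigraph setting, contracting an edge $\{u_1,u_2\}$ whose endpoints have a common neighbour $w$ produces a double edge between the merged vertex $u$ and $w$. Both copies can later be used in separate splitting-off operations. In the lift-minor setting the double edge is immediately merged into one, so one unit of ``edge budget'' is irretrievably lost. I would therefore build $F$ around the gadget $u_1,u_2,w$, forming a triangle, with $w$ adjacent to two further vertices $a,b$. In $F$ one contracts $\{u_1,u_2\}$, obtaining the double edge $u w$, and then splits off the paths $u\,w\,a$ and $u\,w\,b$ to obtain the edges $\{u,a\}$ and $\{u,b\}$ while $w$ survives as a branch vertex. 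I would then embed this gadget into a larger, rigid host: for example, attach $a,b,w,u$ to further structure so that the target $G$ is, say, $K_5$ or a small $4$-regular graph. The requirement is that $G$ needs $u$ to have full degree and needs $w$ to keep its remaining edges.

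First, I would verify directly that $G$ is a split-off minor of $F$ by exhibiting the explicit sequence: contraction, two splittings-off, and then deletion of leftovers. Second, to show that $G$ is not a lift-minor of $F$, I would use a counting invariant. Each multigraph operation lowers $|E|$ by exactly one, and contraction additionally lowers $|V|$ by one. A simple-graph contraction of an edge lying in $k$ triangles lowers $|E|$ by $1+k$, and a simple splitting off $u v w$ with $\{u,w\}$ already present lowers $|E|$ by $2$. I would choose $F$ and $G$ so that
\[
|E(F)|-|E(G)| = \bigl(|V(F)|-|V(G)|\bigr) + s ,
\]
where $s$ is exactly the number of splittings-off forced by degree considerations (for instance via the degree sequence of $G$, or via parity of degrees at the branch vertices). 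Then any lift-minor sequence may not waste a single edge: no deletions, no contraction of an edge in a triangle, and no splitting off that creates an existing edge. I would then argue that some vertex of $G$ of maximum degree must arise from merging vertices of $F$. Since $F$ is chosen with low maximum degree except through the gadget, any such merge necessarily contracts an edge lying in a triangle, or later necessarily splits off into an existing edge, and so wastes an edge, which is a contradiction.

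The main obstacle is the second step. One must rule out \emph{all} lift-minor sequences, including those that first split off the triangle edges to destroy the triangle $u_1u_2w$ before contracting, or that create the high-degree vertex elsewhere. I would first try to make the counting argument airtight by choosing $F$ with every vertex of degree at most $3$ except $w$, so that degree-$4$ vertices of $G$ can only come from contractions. I would also make $|E(F)|$ exactly tight, so that every alternative order provably wastes an edge. If a clean invariant argument fails, I would fall back on picking $F$ and $G$ small enough (on the order of $8$ to $10$ vertices) that the remaining cases after the counting reduction can be checked by a short exhaustive case analysis, or by computer.
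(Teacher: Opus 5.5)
Your proposal has a genuine gap: the gadget you describe cannot separate the two relations, whatever host you embed it in. In your split-off sequence both splittings happen at $w$. Before the contraction, the edges they consume (the two copies of $uw$, together with $wa$ and $wb$) are just the $F$-edges $u_1w$, $u_2w$, $wa$, $wb$. So you can perform the same steps in the other order using simple-graph operations: split off $u_1\,w\,a$, then $u_2\,w\,b$, then contract $u_1u_2$. The result is exactly the underlying simple graph of the multigraph you reach after your contraction and two splittings. Hence any simple $G$ obtained from that multigraph by deleting leftovers is a lift-minor of $F$.

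This reordering is precisely the alternative you flag as the ``main obstacle''. It wastes no edge, so no edge-counting invariant can exclude it. Your suggested parameters are also inconsistent. Degrees only increase at merged vertices, so with the single contraction in your sequence a $4$-regular $G$ forces almost every vertex of $F$ to have degree at least $4$. That contradicts your plan that every vertex except $w$ has degree at most $3$. For $G=K_5$ with your tight count, $F$ has six vertices while $w$ would need degree $\deg_G(w)+4=8$.

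Contraction and splitting fail to commute only when two things happen together. First, a splitting occurs \emph{at the merged vertex} using edges inherited from different vertices merged into it, so the corresponding path in $F$ runs through a contracted edge and cannot be split off beforehand. Second, a triangle on the contracted edge supplies a parallel copy, so the merged vertex keeps that adjacency. This is the paper's gadget. Write $p,q$ for the unlabelled vertices of $A$. The paper contracts $pa_5$ and $qa_5$, which lie in the triangles $pa_2a_5$ and $qa_3a_5$. It then splits off $a_1\,a_5\,a_3$ and $a_2\,a_5\,a_4$ at $a_5$, keeping one copy each of $a_5a_2$ and $a_5a_3$, which yields the path $B$.

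Even with the right gadget, the non-lift-minor half needs more than counting. The paper first fixes the branch vertices by a labelled reduction analogous to Lemma~\ref{lemma:rooted-minors}. It then checks by explicit case analysis that every first deletion, splitting, or contraction in $A$ that creates no multiple edge yields a graph without $B$ as a split-off minor. Your proposal fixes neither $F$ nor $G$ and defers this half to an unspecified exhaustive check, so it does not yet constitute a proof.
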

\begin{proof}[Proof sketch.]
    A result similar to Lemma \ref{lemma:rooted-minors} can be proved for lift-minors and split-off minors. Thus it suffices to find examples of labelled simple graphs $A$ and $B$ such that $B$ is a split-off minor but not a lift-minor of $A$.
    \begin{figure}[htbp]
        \centering
\begin{minipage}{0.48\textwidth}
    \centering
    \begin{tikzpicture}[scale=0.8, thick]
        \node[circle, draw, inner sep=2pt, label=above:{$a_1$}] (a1) at (0, 2) {};
        \node[circle, draw, inner sep=2pt]                      (u1) at (1.5, 0) {};
        \node[circle, draw, inner sep=2pt, label=above:{$a_2$}] (a2) at (2.5, 2) {};
        \node[circle, draw, inner sep=2pt, label=below:{$a_5$}] (a5) at (3.5, 0) {};
        \node[circle, draw, inner sep=2pt, label=above:{$a_3$}] (a3) at (4.5, 2) {};
        \node[circle, draw, inner sep=2pt]                      (u2) at (5.5, 0) {};
        \node[circle, draw, inner sep=2pt, label=above:{$a_4$}] (a4) at (7, 2) {};

        \draw (a1) -- (u1);
        \draw (u1) -- (a2);
        \draw (u1) -- (a5);
        \draw (a2) -- (a5);
        \draw (a5) -- (a3);
        \draw (a5) -- (u2);
        \draw (a3) -- (u2);
        \draw (u2) -- (a4);
    \end{tikzpicture}
\end{minipage}\hfill
\begin{minipage}{0.48\textwidth}
    \centering
    \begin{tikzpicture}[scale=0.8, thick]
        \node[circle, draw, inner sep=2pt, label=above:{$a_1$}] (a1) at (0, 0) {};
        \node[circle, draw, inner sep=2pt, label=above:{$a_3$}] (a3) at (1.5, 0) {};
        \node[circle, draw, inner sep=2pt, label=below:{$a_5$}] (a5) at (3, 0) {};
        \node[circle, draw, inner sep=2pt, label=above:{$a_2$}] (a2) at (4.5, 0) {};
        \node[circle, draw, inner sep=2pt, label=above:{$a_4$}] (a4) at (6, 0) {};

        \draw (a1) -- (a3);
        \draw (a3) -- (a5);
        \draw (a5) -- (a2);
        \draw (a2) -- (a4);
    \end{tikzpicture}
\end{minipage}
        \caption{The graph $A$ (left) and $B$ (right).}
        \label{fig:lift-minor-counterexample}
    \end{figure}

    Define $A$ and $B$ and the vertices $\{a_1,\ldots, a_5\}$ as in Figure \ref{fig:lift-minor-counterexample}.
    To obtain $B$ as a split-off minor of $A$ we contract the two edges from $a_5$ to the unlabelled vertices and then split off the paths of length two from $a_1$ to $a_3$, and from $a_2$ to $a_4$.
    \par 
    To see that $B$ is not a lift-minor of $A$ we use brute-force case analysis. This can be checked more easily by a computer program, but we include a check by hand here for completeness.\par 
    We show that if any edge is deleted, contracted, or a pair of edges is split off in $A$ without creating multiple edges, a graph is obtained that does not contain $B$ as a split-off minor.
    A key observation we will use repeatedly is that if a vertex $w$ lies on a path $P$ between $u$ and $v$, it will stay on that path unless the two edges of $P$ that are incident to $w$ are split off.

    \begin{figure}[htbp]
        \centering
\tikzset{
    vertex/.style={
        circle,
        draw,
        minimum size=4pt, 
        inner sep=0pt
    },
    label_node/.style={
        draw=none,
        inner sep=1pt,
        anchor=center,
        font=\small 
    },
    c_label/.style={
        label_node,
        above,
        yshift=0.05cm 
    }
}

\begin{tikzpicture}[scale=1.5]
    \node[vertex] (c1) at (0, 1.5) [label={[yshift=0.05cm]above:$a_1$}] {};
    \node[vertex] (v1) at (1, 0) {};
    \node[vertex] (c2) at (2, 1.5) [label={[yshift=0.05cm]above:$a_2$}] {};
    \node[vertex] (c5) at (3, 0) [label={[yshift=-0.05cm]below:$a_5$}] {}; 
    \node[vertex] (c3) at (4, 1.5) [label={[yshift=0.05cm]above:$a_3$}] {};
    \node[vertex] (v2) at (5, 0) {};
    \node[vertex] (c4) at (6, 1.5) [label={[yshift=0.05cm]above:$a_4$}] {};

    \draw (c1) -- (v1);
    \draw (v1) -- (c2);
    \draw (v1) -- (c5);
    \draw (c2) -- (c5);
    \draw (c5) -- (c3);
    \draw (c5) -- (v2);
    \draw (c3) -- (v2);
    \draw (v2) -- (c4);

    \node[label_node] at ($(c1)!0.5!(v1) + (0.05, 0.15)$) {$x$};
    \node[label_node] at ($(v1)!0.5!(c2) + (-0.05, 0.15)$) {$y$};
    \node[label_node] at ($(v1)!0.5!(c5) + (0, -0.15)$) {$r$};
    \node[label_node] at ($(c2)!0.5!(c5) + (0.05, 0.15)$) {$z$};
    \node[label_node] at ($(c5)!0.5!(c3) + (-0.05, 0.15)$) {$z'$};
    \node[label_node] at ($(c5)!0.5!(v2) + (0, -0.15)$) {$r'$};
\end{tikzpicture}
        \caption{}
        \label{fig:lift-minor-counterexample-with-labels}
    \end{figure}
    For ease of reference, we label the edges as in Figure \ref{fig:lift-minor-counterexample-with-labels}.
    \begin{itemize}
        \item Edge deletion. Up to symmetry, we need to explore four cases.
        \begin{itemize}
            \item Delete $x$. Then $a_1$ is isolated, which is a property preserved by split-off minors, so the resulting graph cannot contain $B$ as a split-off minor.
            \item Delete $y$. Then $a_2$ has degree 1 and its degree cannot be increased by merging it with other vertices without merging it with $a_5$, which is forbidden. This graph therefore cannot contain $B$ as a split-off minor.
            
            \item Delete $r$. Then every path between $a_1$ and $a_3$ goes through $a_2$. Again, $a_2$ cannot be merged with other vertices to increase its degree. Thus $a_2$ will be isolated after splitting off the path around it.

           \item Delete $z$. The argument is similar to the previous cases.
        \end{itemize}
        \item Splitting off edges. We only need to explore cases that do not create multiple edges.
        \begin{itemize}
            \item Split off $(x,y)$. As before, $a_2$ lies on the only path between $a_1$ and $a_3$ and it cannot be merged to increase its degree.
            \item Split off $(x,r)$. This creates a leaf at the other endpoint of $y$. This leaf can clearly not be used to find the split-off minor and in the rest of the graph $a_2$ only has degree 1 and cannot be merged to another vertex to increase its degree enough.
            \item Split off $(r,z')$. The two paths between $a_2$ and $a_4$ contain $a_3$ and $a_5$ respectively. Neither of them can be split off since they have degree 2 and cannot be merged to increase their degree.
            
            \item The cases for splitting off $(r,r')$ and $(z,z')$ are similar to the previous case.
        \end{itemize}

        \item Contracting edges. The only contraction (up to symmetry) that does not produce multiple edges is the contraction of $x$. Note that the next (non-symmetric) operation must be an edge deletion or splitting off. All such operations give, by the previous cases, a graph that does not contain $B$ as a split-off minor.
    \end{itemize}
\end{proof}

\section{Applications in finding h.d.\ closed classes}

In this section we use Theorem \ref{thm:oddo-implies-lm} to find homomorphism distinguishing closed classes of graphs. 
We begin by stating a convenient corollary that follows directly from Theorem \ref{thm:oddo-implies-hdclosed} and Corollary \ref{cor:wo-implies-som}.
\begin{corollary}
\label{cor:som-means-hdclosed}
    Every class of graphs that is closed under taking split-off minors and disjoint unions is h.d.\ closed.
\end{corollary}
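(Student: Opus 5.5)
The plan is to combine Theorem \ref{thm:oddo-implies-hdclosed} with Corollary \ref{cor:wo-implies-som}. Let $\C$ be a class closed under split-off minors and disjoint unions. Theorem \ref{thm:oddo-implies-hdclosed} requires three closure properties: disjoint unions, restrictions to connected components, and weak oddomorphisms. The first is an assumption, so the work is to verify the other two.

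For weak oddomorphisms, suppose $F \in \C$ and $F$ has a weak oddomorphism to $G$. By Corollary \ref{cor:wo-implies-som}, $G$ is a split-off minor of $F$, so $G \in \C$. This is where the real content of Theorem \ref{thm:oddo-implies-lm} enters.

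For connected components, let $F \in \C$ and let $C$ be a connected component of $F$. Then $C$ is a subgraph of $F$, obtained by deleting the remaining vertices and edges. Since taking subgraphs is one of the operations in Definition \ref{def:split-off-minors}, $C$ is a split-off minor of $F$, and hence $C \in \C$.

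One subtlety needs checking. The class $\C$ consists of simple graphs, while split-off minors are defined with multigraph intermediate steps. Closure of $\C$ under split-off minors should therefore be read as: whenever $F \in \C$ and a simple graph $G$ is a split-off minor of $F$, then $G \in \C$. Both $G$ (the codomain of a weak oddomorphism) and a component $C$ are simple graphs, so this reading suffices. With all three hypotheses verified, Theorem \ref{thm:oddo-implies-hdclosed} yields that $\C$ is h.d.\ closed.

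No step is a genuine obstacle here, since the heavy lifting is already done in Theorem \ref{thm:oddo-implies-lm}. The only point needing care is matching the notion of closure (simple graphs versus multigraphs) with the hypotheses of Theorem \ref{thm:oddo-implies-hdclosed}, including whether the empty graph or trivial cases need separate treatment.
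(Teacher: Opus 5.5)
Your proposal is correct and follows the paper's route: the paper obtains this corollary directly from Theorem~\ref{thm:oddo-implies-hdclosed} together with Corollary~\ref{cor:wo-implies-som}. Your explicit check that connected components are split-off minors (because they are subgraphs) supplies the one hypothesis the paper leaves implicit.
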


We can restate this by observing that if $G$ is a connected multigraph then a graph $F$ contains $G$ as a split-off minor if and only if there is a connected component of $F$ that contains $G$ as a split-off minor. Therefore the set of graphs that exclude a connected multigraph as a split-off minor is closed under disjoint unions.
\begin{corollary}
\label{cor:thm6wqo-variant}
    For every finite set $S$ of connected multigraphs, the class of graphs that exclude the elements of $S$ as split-off minors is h.d.\ closed.
\end{corollary}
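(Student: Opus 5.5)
The plan is to reduce Corollary \ref{cor:thm6wqo-variant} directly to Corollary \ref{cor:som-means-hdclosed}. Let $\C_S$ be the class of (simple) graphs that contain no element of $S$ as a split-off minor. It suffices to show that $\C_S$ is closed under taking split-off minors and under disjoint unions.

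\emph{Closure under split-off minors.} This follows from transitivity of the split-off minor relation, which is immediate from the constructive Definition \ref{def:split-off-minors}. If $F' \in \C_S$ is a simple graph and $F''$ is a simple split-off minor of $F'$, then any $H \in S$ that is a split-off minor of $F''$ would also be a split-off minor of $F'$. That contradicts $F' \in \C_S$. One small point needs checking. In the paper's convention, simple graphs are split-off minors of one another via multigraph intermediate steps. So the class of graphs, regarded inside multigraphs, is handled consistently by concatenating the two operation sequences.

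\emph{Closure under disjoint unions.} This is the step needing the observation stated just before the corollary. Every $H \in S$ is connected, so $F_1 \sqcup F_2$ contains $H$ as a split-off minor if and only if some connected component does. I would prove the nontrivial direction by tracking components through an operation sequence from $F_1\sqcup F_2$ to $H$. Each operation acts within a single component: contracting an edge, deleting an edge or vertex, and splitting off two adjacent edges all involve vertices of one component only. Hence the operation sequence never merges components. The final connected multigraph $H$ therefore lies in the image of a single component $K$, and every other component has been deleted entirely.

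Restricting the sequence to the operations that touch $K$ then gives $H$ as a split-off minor of $K$. For this I would use that operations inside one component commute with operations in the others. Since $K$ is a component of $F_1$ or of $F_2$, and $K$ is itself a split-off minor (subgraph) of that $F_i$, transitivity shows that $F_i$ contains $H$, contradicting $F_i \in \C_S$. The main, though mild, obstacle is stating this component-tracking argument cleanly. An easy way to do so is to note that the component partition of the current multigraph is always a coarsening of the image of the original component partition, and that no operation can connect two of these images.

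With both closure properties established, Corollary \ref{cor:som-means-hdclosed} applies and shows that $\C_S$ is h.d.\ closed. Finiteness of $S$ is not actually used in this argument. It is included because, by Proposition \ref{prop:split-off-is-wqo}, every split-off-minor-closed class is of this form.
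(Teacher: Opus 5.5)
Your proposal is correct and follows the paper's route. You show that $\C_S$ is closed under split-off minors by transitivity, and closed under disjoint unions because a connected multigraph is a split-off minor of a graph if and only if it is a split-off minor of some component; Corollary~\ref{cor:som-means-hdclosed} then applies, and your component-tracking argument just supplies the justification the paper leaves implicit.

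One slip is in your closing aside. Not every split-off-minor-closed class has this form: the graphs with at most one edge form such a class, but it is not closed under disjoint unions and has the disconnected $2K_2$ as a minimal excluded graph. The finiteness remark should be read as the paper intends it, namely that an arbitrary set $S$ can be replaced by a finite one via Proposition~\ref{prop:split-off-is-wqo}.
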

It is enough to consider finite sets $S$ since the split-off minor relation is a well-quasi-ordering, as shown in Proposition~\ref{prop:split-off-is-wqo}.
This of course gives rise to countless new examples of h.d.\ closed classes. However, it is hard to get a grasp on what properties these classes have.
\par

To identify classes that are closed under taking split-off minors, we make use of the following characterisation.

\begin{proposition}
\label{prop:split-off-closure}
    A class $\C$ of graphs is closed under taking split-off minors if and only if there is a class $\M$ of multigraphs whose restriction to simple graphs is equal to $\C$ and $\M$ is closed under taking (multigraph) minors and immersions.
\end{proposition}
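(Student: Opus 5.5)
For the forward direction, assume $\C$ is closed under taking split-off minors, where for simple graphs this means split-off minors obtained via multigraph operations (Definition \ref{def:split-off-minors}). I will take $\M$ to be the class of all multigraphs that are split-off minors of some graph in $\C$. Three checks are then needed. First, $\M$ is closed under multigraph minors and immersions. This holds because each of the operations (taking subgraphs, contracting, splitting off) is itself a split-off minor step and the relation is transitive, so a minor or immersion of an element of $\M$ is again a split-off minor of some member of $\C$. Second, $\C \subseteq \M$, since every graph is a split-off minor of itself. Third, every simple graph in $\M$ lies in $\C$, which is exactly the assumed closure of $\C$. Together these give that the restriction of $\M$ to simple graphs equals $\C$.

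For the converse, let $\M$ be closed under minors and immersions with simple part $\C$, and let $G \in \C$ and $H$ simple with $H$ a split-off minor of $G$. By definition there is a sequence $G = G_0, G_1, \ldots, G_k = H$ of multigraphs, each obtained from the previous by deleting an edge or vertex, contracting an edge, or splitting off a pair of edges (as multigraph operations). The argument is an induction along this sequence. Each single step produces either a minor (deletion, contraction) or an immersion (deletion, splitting off) of the previous multigraph, so $G_{i} \in \M$ implies $G_{i+1} \in \M$. Since $G_0 = G \in \M$, it follows that $H = G_k \in \M$, and as $H$ is simple, $H \in \C$.

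The main point needing care is the conventions. Contraction and splitting off as multigraph operations must agree with those used in the definitions of multigraph minor and immersion. The definitions in the paper are set up precisely so that the minor and immersion relations are generated by these same elementary operations, so each step is literally an instance of one of them. In particular, contraction may create loops (the loop term in the contraction formula), and the minor relation on multigraphs must allow them. Since $\M$ is closed under all multigraph minors, including the intermediate non-simple graphs, this causes no trouble. The only genuine subtlety, that intermediate graphs may be non-simple, is handled by working in $\M$ rather than $\C$.
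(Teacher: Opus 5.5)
Your proposal is correct and follows essentially the same route as the paper. For the forward direction you take $\M$ to be the class of all multigraph split-off minors of members of $\C$. For the converse you use that closure under minors and immersions is the same as closure under each elementary split-off-minor operation, which the paper treats as immediate from the definition and which you spell out as an induction along the sequence of operations.
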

\begin{proof}
    It follows directly from the definition of split-off minors that a class of multigraphs is closed under taking split-off minors if and only if it is closed under taking minors and immersions. \par 
    If $\M$ is a class of multigraphs that is closed under taking split-off minors then its restriction to simple graphs is also closed under taking split-off minors (among the simple graphs). For the other direction, we can define $\M$ as the class of multigraphs that are split-off minors of an element of $\C$.
\end{proof}

By using Corollary \ref{cor:som-means-hdclosed} we manage to recover many of the known examples of classes that are closed under weak oddomorphisms, and some new ones.
\begin{example}
    If $G$ is obtained by splitting off a pair of edges in a tree $F$, then $G$ is a disjoint union of minors of $F$ (even topological minors of $F$). Thus it follows that every class of forests that is closed under disjoint union and taking minors is closed under taking split-off minors. It then follows from Corollary \ref{cor:som-means-hdclosed} that every class of forests that is minor closed and closed under taking disjoint unions is h.d.\ closed.
\end{example}
Note that a slightly stronger statement was proved by Neuen and Seppelt \cite{neuen2026distinguishinggraphscountinghomomorphisms}, namely that every class of forests closed under taking \emph{topological minors} and disjoint unions is homomorphism distinguishing closed.

\begin{example}
    Every split-off minor of a cycle of length $n$ is a disjoint union of cycles and paths of length at most $n$. It thus follows that the class of disjoint unions of cycles and paths of length less than $m$ for $m \in \N \cup \{\infty\}$ is h.d.\ closed.
\end{example}
It should be noted that the result from the above example can be deduced from results by Roberson \cite{roberson2022oddomorphisms}.
\par 
\begin{example}
    A multigraph is a \emph{cactus graph} if each edge belongs to at most one cycle.
    It is an easy exercise to show that cactus multigraphs are closed under both taking minors and immersions.
    Thus it follows that the class of simple cactus graphs is closed under taking split-off minors. Since it is also closed under taking disjoint unions, Corollary \ref{cor:som-means-hdclosed} implies that it is h.d.\ closed.
\end{example}
Note that the h.d.\ closure of the class of simple cactus graphs was already established by Neuen and Seppelt \cite{neuen2026distinguishinggraphscountinghomomorphisms}.
\par 
Observe that the class of cycles and paths is the class of graphs that exclude the star of degree 3 as a split-off minor.
Likewise, the class of forests can be defined as the class of graphs that exclude the double edge as a split-off minor, and the class of cactus graphs is the class of graphs that exclude the triple edge as a split-off minor.
It of course also follows from Corollary \ref{cor:thm6wqo-variant} that the class of graphs that exclude the $n$-fold edge as a split-off minor is also h.d.\ closed. 
This gives a hierarchy of classes defined by a connectivity property that includes forests and cactus graphs, all of which are now shown to be h.d.\ closed.
These classes can of course also be described by excluding a finite set of minors, or a finite set of immersions. But there is no reason to believe that these finite sets admit a simple description.
\par 
Lastly, we use Corollary \ref{cor:som-means-hdclosed} to identify a hierarchy of classes of graphs that have not previously been shown to be h.d.\ closed. 
\begin{definition}
    The \emph{cyclomatic number} of a multigraph $G$ is defined as $\abs{E(G)} - \abs{V(G)} + \abs{c(G)}$ where $c(G)$ is the set of connected components of $G$.    
\end{definition}
Equivalently, the cyclomatic number of $G$ can be defined as the minimum number of edges that need to be deleted from $G$ to make it a forest.

\begin{corollary}
    The class of graphs whose connected components have cyclomatic number at most $n$ is h.d.\ closed.
\end{corollary}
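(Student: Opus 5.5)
The plan is to apply Corollary \ref{cor:som-means-hdclosed}. The class $\C_n$ of simple graphs whose connected components all have cyclomatic number at most $n$ is clearly closed under disjoint unions: the cyclomatic number of a component is unaffected by the other components. So it suffices to show that $\C_n$ is closed under split-off minors. By Proposition \ref{prop:split-off-closure}, I would do this by showing that the class $\M_n$ of multigraphs whose components have cyclomatic number at most $n$ is closed under subgraphs, contractions and splitting off, all taken as multigraph operations. The restriction of $\M_n$ to simple graphs is exactly $\C_n$.

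Write $\beta(H) = \abs{E(H)} - \abs{V(H)} + \abs{c(H)}$. Rather than tracking components individually, I would work with the quantity
\[
\mu(H) = \max\{\beta(K) : K \text{ a component of } H\},
\]
and show that none of the three operations increases $\mu$. Any component $K'$ of the resulting graph arises from a single component $K$ of the original graph, since no operation joins components. So it is enough to consider a connected multigraph $K$ and prove that every component of the result has cyclomatic number at most $\beta(K)$. Each operation is then checked separately.

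\textbf{Subgraphs.} Deleting an edge either keeps the component connected, lowering $\beta$ by one, or splits it into two pieces whose cyclomatic numbers sum to $\beta(K)$. Each piece then has $\beta \le \beta(K)$, because $\beta \ge 0$. Deleting an isolated vertex does not affect any other component.

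\textbf{Contraction.} With the multigraph contraction defined in the preliminaries, contracting a non-loop edge $e$ removes one vertex and exactly one edge. The other parallel edges become loops and are kept. Connectivity is preserved, so $\beta$ is unchanged.

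\textbf{Splitting off.} Splitting off edges $\{u,v\},\{v,w\}$ removes two edges and adds one, so the edge count drops by one and the vertex set is unchanged. If the result stays connected, $\beta$ drops by one. If it disconnects, it does so into at most two components, because the new edge $\{u,w\}$ keeps $u$ and $w$ together and only $v$ can become separated. In that case the two cyclomatic numbers sum to $\beta(K)$, so each is at most $\beta(K)$. The case $u = w$ creates a loop at $u$ and is handled by the same count.

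The main obstacle is bookkeeping rather than ideas. One must make sure the multigraph conventions, namely that loops are kept under contraction and that splitting off may create loops, match the counting above, and that the component-wise maximum, not the total, is the quantity preserved. The total cyclomatic number also never increases, but the statement is about components, so the splitting argument above is needed. With this monotonicity in hand, $\M_n$ is closed under minors and immersions. Proposition \ref{prop:split-off-closure} then makes $\C_n$ closed under split-off minors, and Corollary \ref{cor:som-means-hdclosed} gives that $\C_n$ is h.d.\ closed.
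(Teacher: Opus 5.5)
Your proof is correct and follows essentially the same route as the paper. Both reduce via Corollary~\ref{cor:som-means-hdclosed} to closure under split-off minors, check each operation on a connected multigraph, and use that component cyclomatic numbers are nonnegative and sum to the total. The differences are only presentational: the paper bounds the total cyclomatic number after each operation and then appeals to additivity over components, whereas you work with the multigraph class $\M_n$ via Proposition~\ref{prop:split-off-closure} and explicitly count the at most two resulting components.
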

\begin{proof}
    We show that the class is closed under taking split-off minors. The class is trivially closed under taking disjoint unions. It thus follows from Corollary~\ref{cor:som-means-hdclosed} that it is h.d.\ closed.
    To show that the class is closed under taking split-off minors, it suffices to show that when any of the allowed operations are applied to a connected graph $G$, the cyclomatic number of each of the resulting components is at most the cyclomatic number of $G$. We first show that the graph $F$ obtained from $G$ by applying one of the allowed operations has cyclomatic number at most that of $G$. We examine the four cases separately.
    \begin{itemize}
        \item If $F$ is obtained from $G$ by contracting an edge, then the numbers of edges and vertices are both reduced by one, while the number of connected components stays constant. Hence the cyclomatic number of $F$ is equal to the cyclomatic number of $G$.

        \item If $F$ is obtained from $G$ by splitting off a pair of edges, then the number of edges is reduced by one, the number of vertices is unchanged, and the number of connected components increases by at most one. Thus the cyclomatic number of $F$ is at most the cyclomatic number of $G$.

        \item If $F$ is obtained from $G$ by deleting an edge, then the number of edges decreases by one, the number of vertices is unchanged, and the number of connected components increases by at most one. Thus the cyclomatic number of $F$ is at most the cyclomatic number of $G$.

        \item If $F$ is obtained from $G$ by deleting an isolated vertex, then the number of edges is unchanged, while the numbers of vertices and connected components both decrease by one. Hence the cyclomatic number of $F$ is equal to the cyclomatic number of $G$.
    \end{itemize}
    Now note that the cyclomatic number of $F$ is equal to the sum of the cyclomatic numbers of its components. Thus the cyclomatic number of each component of $F$ is at most the cyclomatic number of $G$, which finishes the proof. 
\end{proof}

We should note that the preceding result resembles a result by Neuen and Seppelt \cite{neuen2026distinguishinggraphscountinghomomorphisms}, which states that the class of graphs whose connected components have a \emph{feedback vertex set number} at most $n$ is h.d.\ closed. 
The feedback vertex set number of a graph is the minimum number of vertices that need to be deleted from the graph to make it a forest.

\section{Conclusion}

In this paper, we explored the connection between oddomorphisms and minor-like structural graph relations.
We proved that the existence of an oddomorphism from a graph $F$ to a graph $G$ does not imply that $G$ is a minor of $F$, but it does imply that $G$ is a split-off minor of $F$.
\par 
Plenty of questions remain open.
It is not known whether the existence of an oddomorphism from $F$ to $G$ implies that $F$ contains $G$ as an immersion. This would be a strict strengthening of our result about split-off minors since if $G$ immerses into $F$ then $G$ is a split-off minor of $F$. This would also prove the modification of the Strong Roberson Conjecture obtained by replacing minors with immersions.
Even solving this in special cases would be an important result. It is, for example, not known whether any $l$-lift of a graph $G$ contains $G$ as an immersion. Following Drier and Linial \cite{Drier2004Minors}, it would also be interesting to determine the size of the largest clique that immerses into a random $l$-lift of $K_n$. This size is trivially upper bounded by $n$ and it follows from \cite{GAUTHIER201998} that it is $\Omega(n)$. However, whether it always reaches $n$ is still unknown.
In this context it is worth noting that a result by DeVos et al. \cite{devos2010immersing} implies that for $n\leq 7$, all $l$-lifts of $K_n$ contain a $K_n$ immersion.
\par 
The Strong Roberson Conjecture remains open. As mentioned in the introduction, our example of a class of graphs that is minor closed but not closed under weak oddomorphism eliminates what was considered to be the only known feasible method for proving it. 
This suggests that the conjecture may be false.
The examples given in Section \ref{sec:oddo-doesnt-mean-minor} provide the starting point for an attempt at proving this, since a precondition for a class not being h.d.\ closed is that it is not closed under weak oddomorphisms.
The class of graphs of Hadwiger number at most $n$ for a large $n$ is therefore a reasonable candidate for such a class.

\par 
Finally, the Weak Roberson Conjecture is still open and the path of using oddomorphisms to prove it is still viable. 
In fact, it suffices to prove that for every $n$ there exists a number $N$ such that every graph $F$ with an oddomorphism to $K_N$ contains a $K_n$ minor, as explained in Theorem 8.4 and Question 7 in Roberson's paper \cite{roberson2022oddomorphisms}.

\section{Acknowledgements}
I am grateful to my supervisor, Anuj Dawar, for his guidance throughout this research, for frequent insightful conversations on the topic of the paper, and for his valuable feedback on a draft of the paper.
I am also grateful to Tim Seppelt and David Roberson for helpful discussions on the topic, and for reading a preliminary version of Section~\ref{sec:oddo-doesnt-mean-minor}.

\bibliographystyle{plainurl}
\bibliography{bibtex}

\end{document}